\documentclass[11pt]{article}
\usepackage{amsfonts}
\usepackage{graphicx}
\usepackage{grffile}
\input epsf
\usepackage[nosort]{cite}
\usepackage{amssymb,amsmath,amsthm}
\usepackage{enumerate}
\usepackage{tensor}
\usepackage{mathrsfs}
\usepackage{comment}
\usepackage{breqn}
\usepackage{xcolor}
\usepackage{mathtools}
\usepackage{booktabs}
\usepackage{array}

\DeclarePairedDelimiter\floor{\lfloor}{\rfloor}
\usepackage[english]{babel}
\usepackage{braket}
\usepackage{dcolumn}
\usepackage{bm}
\usepackage[hidelinks]{hyperref}
\usepackage{slashed}  
\usepackage{appendix}
\usepackage{algorithm}
\usepackage{algpseudocode}
\textheight 22cm\textwidth 15.5cm
\oddsidemargin 0pt\evensidemargin 0pt
\topmargin-20pt

\addtolength{\parskip}{1ex}
\jot=.5ex

\makeatletter\@addtoreset{equation}{section}\makeatother

\setlength\arraycolsep{2pt}

\def\bra#1{\mathinner{\langle{#1}|}}
\def\ket#1{\mathinner{|{#1}\rangle}}

\def\on {\otimes n}

\renewcommand{\title}[1]{\vbox{\center\LARGE{#1}}\vspace{5mm}}
\renewcommand{\author}[1]{\vbox{\center#1}\vspace{5mm}}
\newcommand{\address}[1]{\vbox{\center\em#1}}

\setcounter{MaxMatrixCols}{24}

\newcommand{\wt}{{\rm{wt}} }

\newcommand{\tr}{{\rm{tr~}}}

\setcounter{tocdepth}{2}

\begin{document}
\newtheorem{theorem}{Theorem}
\newtheorem{lemma}{Lemma}
\newtheorem{conjecture}{Conjecture}
\newtheorem{example}{Example}
\newtheorem{corollary}{Corollary}

\begin{titlepage}
\begin{center}

\vskip 1cm

\setcounter{tocdepth}{2}

\title{Invariant Theory, Magic State Distillation, and Bounds on Classical Codes}

\author{Amolak Ratan Kalra$^{1,2}$ and Shiroman Prakash$^{3}$}

\address{
${}^1$ Institute for Quantum Computing, and David R. Cheriton School of Computer Science
University of Waterloo, Waterloo, Ontario, Canada\\
${}^2$ Perimeter Institute for Theoretical Physics
Waterloo, Ontario, Canada\\
${}^{3}$Department of Physics and Computer Science, Dayalbagh Educational Institute, Agra, India}

\end{center}

\begin{abstract}
We show that the physical consistency of magic state distillation imposes new constraints on the weight enumerators of classical error-correcting codes. We establish that for $|T\rangle$-state distillation protocols based on linear self-orthogonal $GF(4)$ codes, the distillation threshold and noise-suppression exponent are directly determined by the code's simple weight enumerator. By enforcing the physical consistency of the distillation process --specifically, that the probability of successfully projecting onto the target state must be non-negative -- we derive a new set of constraints on classical weight enumerators. These ``quantum consistency'' constraints prove to be strictly stronger than those derived from classical invariant theory, yielding new upper bounds on the minimum distance of certain classical and quantum codes. Most notably, we show that these new constraints resolve a long-standing open problem in classical coding theory by proving the non-existence of extremal Hermitian self-dual codes over $GF(4)$ with parameters $[12m, 6m, 4m+2]$. Additionally, we use our formalism to perform an exhaustive search of distillation protocols based on linear $GF(4)$ codes with $n < 20$, finding no protocols with thresholds exceeding the 5-qubit code, and we derive analytic upper bounds on the noise-suppression exponents of such distillation routines as a function of $n$.
\end{abstract}

\vfill

\end{titlepage}

\eject \tableofcontents

\clearpage

\section{Introduction}
\label{Intro}

Magic state distillation \cite{MSD, knill2004faulttolerant} is a protocol for fault-tolerant quantum computing that utilizes quantum error-correcting codes in a manner with no obvious classical analogue. While there have been many advances in our understanding in the decades since \cite{MSD} was first published, (see, e.g., \cite{reichardt2005quantum, Reichardt_2009, ACB, CampbellAnwarBrowne, campbellEnhanced, PhysRevA.86.052329, PhysRevLett.120.050504, Haah2018codesprotocols, 2020golay, prakash2024low, nguyen2024quantum,wills2024constant, golowich2024asymptotically, tamiya2024polylog}), the theoretical foundations of magic state distillation remain poorly understood. Determining the best-attainable noise threshold for magic state distillation is of deep theoretical significance. As argued by Bravyi and Kitaev \cite{MSD}, if any state outside the stabilizer octahedron is distillable, then non-stabilizerness is sufficient for universality; conversely, a stricter bound would imply the existence of a new complexity class intermediate between stabilizer mechanics and universal quantum computation \cite{aaronson2004improved}.

Bravyi and Kitaev \cite{MSD} defined two single-qubit magic states; following their notation, we denotes these as $\ket{H}$ and $\ket{T}$. 
Most work on magic distillation focuses on $\ket{H}$-type magic states, as these can be distilled with very low overhead -- however, as explained by Bravyi and Kitaev \cite{MSD}, to address the theoretical question of the best-attainable threshold of magic state distillation, one must focus on $\ket{T}$ state-distillation. To date, the best-known threshold for distillation of $\ket{T}$ states is achieved by the 5-qubit code originally proposed by Bravyi and Kitaev \cite{MSD} 20 years ago, which is significantly lower than the theoretical upper bound. Moreover, while much is known about families of codes that distill the $\ket{H}$ state, e.g., triorthogonal codes, comparatively little is known about the structure of codes that distill the $\ket{T}$ state. We therefore focus on characterizing codes for  distillation of $\ket{T}$ states in this paper. 

While a generic stabilizer code corresponds to an additive self-orthogonal code over $GF(4)$, Bravyi and Kitaev \cite{MSD} showed that stabilizer codes that correspond to \textit{linear} self-orthogonal codes over $GF(4)$ are naturally suited to distillation of $\ket{T}$ states. This is because $\ket{T}$ states are eigenstates of a order-3 Clifford unitary, known as $M_3$ \cite{rall2017signed}, and stabilizer codes that possess a transversal $M_3$ gate correspond to linear codes over $GF(4)$. Such stabilizer codes are also known as $M_3$-codes \cite{rall2017signed}, and we review their properties in detail in section \ref{sec:basics}.\footnote{The family of $M_3$-codes is quite large, and, is of interest beyond magic state distillation --  in particular, \textit{any} stabilizer code with a complete set of transversal Clifford gates (e.g., the 7-qubit Steane code) is an $M_3$-code.}  Building on the formalism of signed weight-enumerators developed by Rall \cite{rall2017signed}, we establish a rigorous framework connecting the performance of an $M_3$-code for magic state distillation to the code's simple weight enumerator. This mapping allows us to import powerful tools from the theory of polynomial invariants (e.g., \cite{MALLOWS197568,rains, rains2002self}) 
to characterize magic state distillation routines, previously used to constrain the distances of classical and quantum codes.

Furthermore this investigation reveals a surprising inversion of a standard paradigm. Historically, the interaction between classical and quantum error correction has been largely unidirectional. Classical codes are routinely employed as a resource to construct quantum error-correcting codes, often via the Calderbank-Shor-Steane (CSS) construction \cite{calderbank1996good, steane1996multiple}, and more generally via the correspondence between stabilizer codes and additive codes over $GF(4)$ \cite{calderbank1998quantum}.  In this work, we reverse this logic. We find that physical constraints arising from quantum mechanical consistency of distillation via $M_3$-codes imposes new bounds on classical linear self-dual and self-orthogonal codes over $GF(4)$, independent of all previously known constraints. 

These new ``quantum constraints'' resolve a long-standing mystery in classical coding theory, concerning the classification of \textit{extremal self-dual codes}. Extremal self-dual codes are amongst the most symmetric objects in mathematics and play an important role in lattices and sphere packings, the classification of finite groups, combinatorics, and modular forms (see for example \cite{elkies2000lattices,ebeling1994lattices} for more details in this direction.). Of particular interest to us are Hermitian self-dual codes over $GF(4)$, commonly known as Type $4^H$ self-dual codes, which are bounded by the Mallows-Sloane bound \cite{MALLOWS1973188} relating the minimum distance $d$ to the length $n$ via $d \leq 2 \lfloor n/6 \rfloor + 2$. Codes achieving this bound are termed \textit{extremal}. While extremal type IV codes have  been constructed  for many values of $n$, the existence of the family of extremal Type IV codes with parameters $[12m,6m,4m+2]_{GF(4)}$ has remained an open question \cite{macwilliams1978self, lam1990there, huffman1990extremal, huffman1991extremal, rains2002self}. 
Classical linear programming bounds allow, and strongly suggest, the existence of such extremal codes; computer-assisted searches failed to find codes with parameters $[12,6,6]$ or $[24,12,10]$. Our mapping explains these absences and moreover proves the non-existence of any code in the extremal family with parameters $[12m,6m,4m+2]$. We achieve this by showing that any such code would correspond to an $[[n,0]]$ $M_3$-code, i.e. a $12m$-qubit stabilizer state, with negative projection probabilities when applied to $12m$ copies of the magic $\ket{T}$-state. Consequently, the physical requirement of quantum consistency proves the  non-existence of this entire family of classical self-dual codes.

Beyond resolving this open problem, we utilize our framework to explore the landscape of physically realizable distillation protocols. We perform an exhaustive search of all $[[n,1]]$, $M_3$-stabilizer codes with $n<20$, finding no protocols with thresholds exceeding that of the 5-qubit code. We also use our formalism to analytically constrain magic state distillation protocols based on $M_3$-codes. While error-correcting codes are conventionally characterized by distance, magic state distillation routines should instead be characterized by a ``noise-suppression exponent'' that is not equal to the distance. 
For example, the 5-qubit code distills $\ket{T}$ states with quadratic noise suppression although it is a distance $3$ code, demonstrating that distance is not the relevant metric for $\ket{T}$-state distillation. Linear programming and invariant theory has previously been used to bound the distance of quantum error-correcting codes \cite{rains,rains1998quantum}; however, no constraints on noise-suppression of magic state distillation routines have previously appeared in the literature.  We derive upper bounds on the noise-suppression exponents for distillation via $M_3$-codes. We are also able to place stronger upper bounds on the distances of $[[n,1]]$ $M_3$-codes, using our quantum consistency conditions. 

The paper is organized as follows. In Section \ref{sec:basics}, we review magic state distillation, $M_3$-codes and signed weight enumerators. The remaining sections contain new results. In Section \ref{sec:signed-from-simple} we show that the performance of an $M_3$-code for magic state distillation is strictly captured by its simple weight enumerator, and use this result to carry out an exhaustive search for distillation routines over all $[[n,1]]$ $M_3$-codes with $n \leq 20$. In Section \ref{sec:constraints}, after reviewing classical constraints, we derive new quantum  constraints on weight enumerators that arise from demanding non-negative success probabilities and thresholds outside the stabilizer polytope; we then show these place much stronger bounds on weight-enumerators than classical constraints alone. In Section \ref{sec:self-dual}, we apply these constraints to rule out the $[12m, 6m, 4m+2]$ extremal type $4^H$ self-dual codes .  In Section \ref{sec:distillation-routines}, we use our formalism to place bounds on the best-attainable noise suppression exponent of a magic state distillation routine based on an $[[n,1]]$ $M_3$-code. We conclude with a discussion in Section \ref{discussion}. In Appendix \ref{app:small-examples}, we explicitly compute the space of weight enumerators for $[[n,0]]$ and $[[n,1]]$ $M_3$-codes consistent with classical and quantum constraints, and compare them to weight enumerators of actual codes. In Appendix \ref{app:integer-programming}, we provide integral weight enumerators for putative codes with lengths $23 \leq n \leq 35$ that would exhibit high thresholds, though their existence remains an open question.

\section{Magic state distillation and signed weight enumerators}
\label{sec:basics}

\subsection{Review of magic state distillation}

Here we provide a very brief review of magic state distillation, primarily for the purpose of establishing notation and conventions. The reader is encouraged to consult \cite{MSD} for more details.

In the magic state model, we assume access to a quantum computer restricted to \textbf{fault-tolerant stabilizer operations}. Specifically, the device can:
\begin{itemize}
    \item Prepare qubits in the computational basis states $\ket{0}$ and $\ket{1}$.
    \item Apply unitary gates from the Clifford group.
    \item Perform measurements of Pauli observables.
\end{itemize}
These operations are possible to implement fault-tolerantly, using, e.g., the surface code or Ising anyons. However, these operations can also be efficiently classically simulated, via the Gottesman-Knill theorem\cite{aaronson2004improved}.
To achieve universality, Bravyi and Kitaev proposed that we supplement fault-tolerant stabilizer operations with the ability to prepare ancillae in certain non-stabilizer states known as magic states. We focus on the magic state $\ket{T}$, given by,
\begin{equation}    \ket{T}\bra{T} =  \frac{1}{2}\left( I+ \frac{1}{\sqrt{3}} (X+Y+Z) \right).
\end{equation}Bravyi and Kitaev showed that, using $O(1)$ pure $\ket{T}$ states, one can implement a non-Clifford gate via state injection, thereby achieving universality. 

However, our quantum computer cannot produce $\ket{T}$ states fault-tolerantly; any magic states produced will be noisy. Crucially, Bravyi and Kitaev showed that it is possible to distill one $\ket{T}$ state of arbitrarily high-fidelity from many low-fidelity $\ket{T}$ states, provided that the fidelity of the input states exceeds a critical threshold, via a process only involving Clifford unitaries and Pauli measurements, known as magic state distillation. 

\begin{figure}
    \centering
    \includegraphics[width=0.6\linewidth]{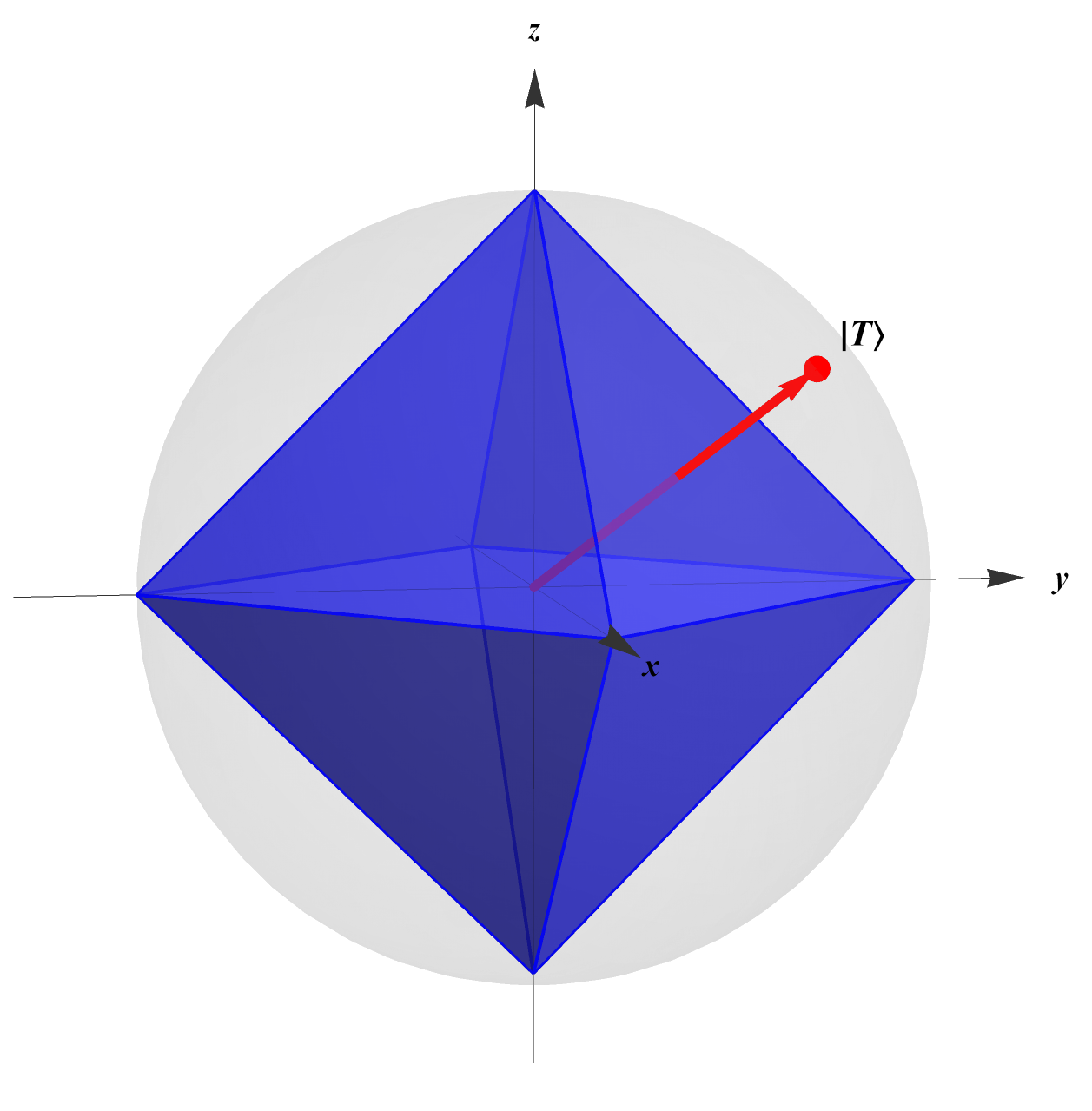}
    \caption{The stabilizer octahedron and the  $\ket{T}$ state.}
    \label{fig:octahedron}
\end{figure}
Classical-simulability and magic state distillation are closely related. Define the stabilizer octahedron to be the set of convex mixtures of single-qubit stabilizer states, shown in Figure \ref{fig:octahedron}. Mixed states within the stabilizer octahedron cannot be used to achieve universal quantum computation, as Clifford operations and stabilizer measurements on such states are classically simulable via the Gottesman-Knill theorem \cite{aaronson2004improved}. If a quantum computer is able to produce noisy magic states that lie outside the stabilizer octahedron, it cannot be efficiently simulated via the Gottesman-Knill theorem. However, is the ability to produce noisy magic states outside the stabilizer octahedron enough to achieve quantum universality? If not, this implies a new complexity class exists between stabilizer mechanics and universal quantum computation. To demonstrate that that non-stabilizerness is not only necessary but also sufficient for universal quantum computing, one would have to show that, given a sufficiently large supply of noisy magic states outside the stabilizer octahedron, they can be distilled to an arbitrarily high-fidelity magic state. 

\subsection{Magic state distillation and $M_3$-codes}

It was shown by Campbell and Browne  \cite{campbell2009structure} that any magic state distillation protocol can be presented as a stabilizer reduction -- i.e., projection of $n$ noisy magic states onto the codespace of an $[[n,k]]$ stabilizer code, followed by decoding. By demanding the stabilizer code used for distillation possesses certain symmetry, we can simplify the search for, and analysis of, magic state distillation protocols considerably.

The $\ket{T}$ state is an eigenvector of an order-3 element of the Clifford group, 
\begin{equation}
M_{3}=\frac{e^{\frac{i\pi}{4}}}{\sqrt{2}}\begin{pmatrix}
1 & 1\\
i & -i\\
\end{pmatrix},
\end{equation} which acts on Pauli operators as follows,
\begin{equation}
    M_3^\dagger X M_3 = Y, \quad M_3^\dagger Y M_3 = Z, \quad M_3^\dagger Z M_3 = X.
\end{equation}

As depicted in Figure \ref{fig:octahedron}, the $\ket{T}$ state lies directly above a face of the stabilizer octahedron. On the Bloch sphere, acting with $M_3$ corresponds to a $\frac{2\pi}{3}$-rotation around the axis connecting the maximally mixed state to $\ket{T}$, depicted as a red line in Figure \ref{fig:octahedron}. 

We define an \textbf{$\mathbf{M}_3$-code} as an $[[n,k]]$ stabilizer code, whose projector, $\Pi_S$, commutes with $M_3^{\otimes n}$: \begin{equation}
    [M_3^{\otimes n}, \Pi_S]=0. \label{M3-condition}
\end{equation} 
Because they respect the Clifford symmetry of the $\ket{T}$ state, $M_3$-codes are natural candidates for magic state distillation\footnote{We should remark that, while all known protocols for distilling the $\ket{T}$ state are based on $M_3$-codes, there is no proof that these are the most general protocols for $\ket{T}$ state distillation.}. 
 For distillation, we focus on $[[n,1]]$ $M_3$-codes. 
For such codes the logical operators can always be chosen such that the $M_3^{\otimes n}$ implements a logical $\overline{M}_3$ or $\overline{M}_3^\dagger$ gate. 
The standard 5-qubit code \cite{MSD} is the simplest example of such a code.

Given an arbitrary single-qubit mixed state \begin{equation}
\rho(a_X, a_Y, a_Z)= \frac{1}{2}\left( I + a_X X+ a_Y Y+a_Z Z) \right),
\end{equation} we can produce a state $\rho_{T}$ via the \textit{twirling} operation
\begin{equation}
    \hat \rho \to \rho_{T}= \frac{1}{3} \hat \rho + \frac{1}{3} M_3 \hat \rho M_3^\dagger + \frac{1}{3} M_3^2 \hat \rho (M_3^{\dagger})^2.
\end{equation} 
Consider any noisy magic state $\rho(a_X,a_Y,a_Z)$. First apply Clifford unitaries to ensure that $a_X\geq0$, $a_Y\geq0$ and $a_Z\geq0$, then twirl the magic state. After twirling, the noisy magic state will take the form
\begin{equation}
\rho_{T}(r)= \frac{1}{2}\left( I + \frac{r}{\sqrt{3}}(X+Y+Z) \right). \label{twirled-state}
\end{equation}
The twirled magic state depends on a single parameter $r$, which represents the radial distance from the center of the Bloch sphere (along the red line in Figure \ref{fig:octahedron}). In terms of the input coefficients, $r$ is given by,
\begin{equation}
    r=\frac{1}{\sqrt{3}}(a_X+a_Y+a_Z).
\end{equation}
$\rho_{\rm twirled}$ is outside the stabilizer octahedron if and only if the untwirled noisy magic state $\rho(a_X,a_Y,a_Z)$ is outside the stabilizer octahedron. It is also convenient to introduce the error-rate $\epsilon$, which is related to $r$ via $r=1-2\epsilon$. 

For distillation via an $[[n,1]]$ $M_3$-stabilizer code (with logical operators chosen so that $M_3^{\otimes n}$ acts as $\overline{M}_3$ or $\overline{M}_3^{-1}$), the output of the distillation routine is also of the form given in Equation \eqref{twirled-state}. The relation between input and output of a magic state distillation routine can then be captured by a single function $\epsilon_{\rm out} = f_{\rm MSD}(\epsilon).$ For the 5-qubit code, this function was computed by Bravyi and Kitaev \cite{MSD} to be,
\begin{equation}
    \epsilon_{\rm out} = \frac{\epsilon_{\rm in} ^2 \left(5-15 \epsilon_{\rm in} +15 \epsilon_{\rm in} ^2-4 \epsilon_{\rm in}   ^3\right)}{1-5 \epsilon_{\rm in} +15 \epsilon_{\rm in} ^2-20 \epsilon_{\rm in} ^3+10 \epsilon_{\rm in} ^4}. \label{msd-5}
\end{equation}
\begin{figure}
    \centering
    \includegraphics[width=0.6\linewidth]{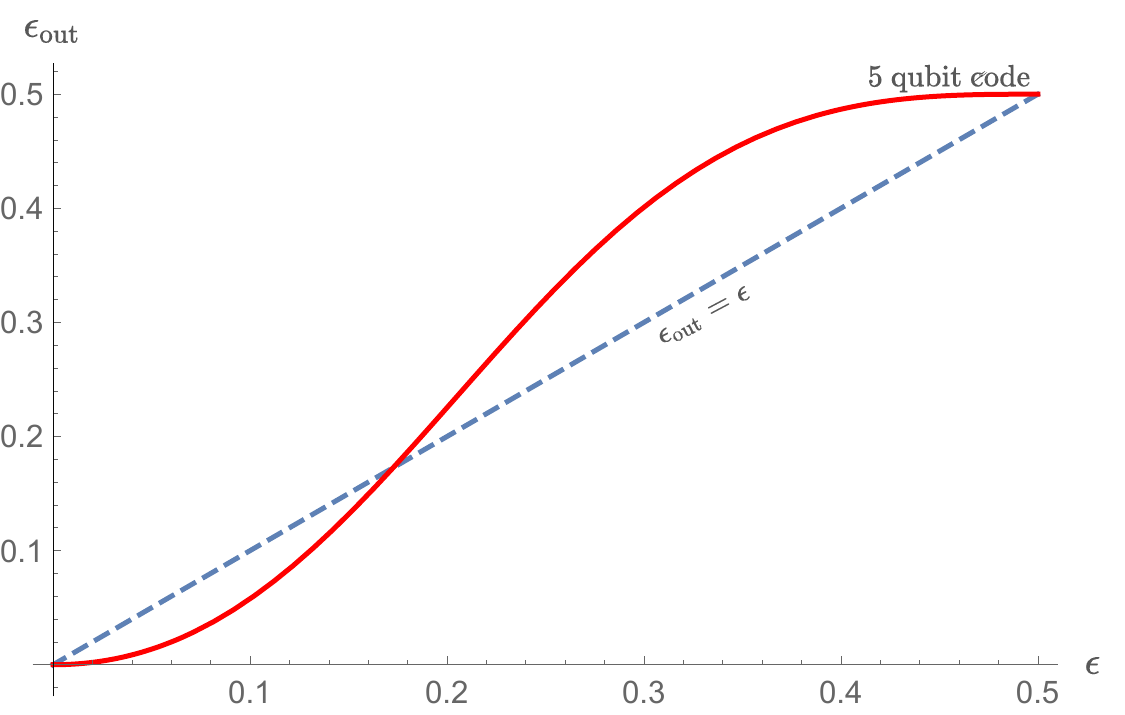}
    \caption{The distillation performance, $\epsilon_{\rm out}=f_{\rm MSD}(\epsilon_{\rm in})$, of the 5-qubit code is plotted in red. The intersection of this curve with the  line $\epsilon_{\rm out}=\epsilon_{\rm in}$, shown as a blue dashed line, determines the threshold of the 5-qubit code, which is $\approx 0.17$.}
    \label{fig:5-qubit-distillation}
\end{figure}
Equation \eqref{msd-5} is plotted in Figure \ref{fig:5-qubit-distillation}. The \textbf{threshold} of a magic state distillation routine is a fixed point $\epsilon_*$ of $f_{\rm MSD}$ such that $\epsilon_{\rm out}<\epsilon_{\rm in}$ for all $\epsilon_{\rm in} < \epsilon_*$. For the five qubit code, we obtain the threshold,
\begin{equation}
\epsilon_*^{\rm (5-qubit)} = \frac{1}{14} \left(7-\sqrt{21}\right) \approx 0.17.
\end{equation}
To date, this is the best known threshold of any qubit magic state distillation routine. 

The theoretical upper-bound on the threshold set by the classical simulability is $\epsilon_{\rm max}=\frac{1}{2}\left(1-\frac{1}{\sqrt{3}}\right)\approx 0.21$ which is where the face of the stabilizer octahedron intersects the red line in Figure \ref{fig:octahedron}. It can be shown that no $[[n,1]]$ distillation protocol can achieve this bound \cite{campbell2010bound}.  A stabilizer quantum computer with access to a supply of noisy $\ket{T}$-states with $\epsilon \in (0.17, \epsilon_{\rm max})$ is not classically simulable. However, at present, it is not known how to use such a quantum computer to achieve universal quantum computation. If a distillation protocol could be found with a threshold better than $0.17$, it would reduce the size of this non-simulable-and-possibly non-universal region of state space. And, if a family of protocols can be found whose threshold approaches $\epsilon_{\rm max}$, it would demonstrate that non-stabilizerness is sufficient for quantum computation. 

Let us conclude with some terminology that we will use in the following sections. We refer to the probability of successfully projecting $n$ twirled noisy magic states onto a stabilizer code as the \textbf{success probability} $\eta(\epsilon)$. We refer to the function $\epsilon_{\rm out}=f_{\rm MSD}(\epsilon_{\rm in})$ as the \textbf{distillation performance} of a stabilizer code. For small $\epsilon_{\rm in}$, we expect $f_{\rm MSD} \sim A \epsilon_{\rm in}^\nu$ for some $\nu$ -- we refer to $\nu$ as the \textbf{noise suppression exponent} of the distillation protocol. $\nu$ is usually not equal to the distance of the stabilizer code -- for example, although the $5$-qubit code is a distance $3$ quantum error-correcting code, we see from equation \eqref{msd-5} that  $\epsilon_{\rm out} \sim 5 \epsilon_{\rm in}^2$, i.e., it has $\nu=2$.

\subsection{Signed weight enumerators}
In this subsection, we review the work of Rall \cite{rall2017signed}, who introduced signed weight enumerators to characterize the performance of qubit magic state distillation routines.  

\subsubsection*{Stabilizer projectors and signed weight enumerators}
\label{sec:stabilizer-projection}

Here, we will arrive at the definition of signed weight enumerators by computing the probability for successfully projecting $n$ copies of a single-qubit mixed state onto the codespace of a stabilizer code. We use the following notation. Let $I$ be the identity operator, and $X$, $Y$ and $Z$ be the Pauli matrices. Let $\mathcal P_n$ be the set of all $n$-fold tensor products of $I$, $\pm X$, $\pm Y$, $\pm Z$ and let $\mathcal P_n^+$ denote the set of all $n$-fold tensor products of $I$, $+X$, $+Y$, $+Z$. We refer to $\mathcal P_n^+$ as the set of positive Pauli operators. Any element $\tilde{P}\in \mathcal P_n$ can be written in the form $\tilde{P}= \lambda P$ with $\lambda = \pm 1$ and $P \in P_n^+$.

An $[[n,k]]$ stabilizer code is defined as the joint eigenspace of a collection of $n-k$ independent, commuting $n$-qubit Pauli operators. These operators generate an abelian subgroup $\mathcal S$ of $\mathcal P_n$. Each Pauli operator has two eigenvalues $+1$ and $-1$; therefore, to completely specify the eigenspace, one must also specify the eigenvalue for each Pauli operator $P \in \mathcal S$, which we denote as $\lambda(P)=\pm 1$. The projector onto the codespace can be written as 
\begin{equation}
    \Pi_S = \frac{1}{2^{n-k}} \sum_{\lambda(P) P \in \mathcal S} \lambda(P)P,
\end{equation}
where, to avoid ambiguity, we will always specify  each $\tilde{P} \in \mathcal S$ as $\tilde{P}=\lambda(P)P$, i.e., the product of a sign, $\lambda(P)= \pm 1$, and a positive Pauli operator, $P \in \mathcal P^+$.

We write a general single-qubit density matrix as,
\begin{equation}
\rho(\vec{a}) = \frac{1}{2}\left( a_I + a_X X+a_Y Y +a_Z Z\right) = \frac{1}{2}\sum_{Q \in \{I,X,Y,Z\}} a_Q Q,
\end{equation}
where $\vec{a}=(a_I, a_X, a_Y, a_Z)$ and $a_I=1$ if the state is normalized. 
Let us compute the probability 
\begin{equation}
    \eta = \tr [\Pi_{\mathcal S} \rho(\vec{a})^{\otimes n}],
\end{equation} 
of successfully projecting the $n$-fold tensor product of $\rho(\vec{a})$ onto the stabilizer code $\mathcal S$. Let $Q$ be a single-qubit Pauli operator, and let $\wt_Q(P)$ denote the total number of $Q$'s in a multi-qubit Pauli operator $P$, so, e.g., $\wt_Z(XZZXI)=2$. We can write,
\begin{equation}
\rho(\vec{a})^{\otimes n} = \frac{1}{2^n}\sum_{P \in {P_n^+}} a_I^{\wt_I(P)}a_X^{\wt_X(P)}a_Y^{\wt_Y(P)}a_Z^{\wt_Z(P)} P.
\end{equation}
Now,
\begin{equation}
 \tr (P\Pi_{\mathcal S} ) = \begin{cases} 2^k \lambda(P) & \lambda(P) P \in \mathcal S \\ 0 & \lambda(P) P \notin \mathcal S. \end{cases}
\end{equation} Hence,
\begin{equation}
\begin{split}
    \eta = \tr (\Pi_{\mathcal S}\rho(\vec{a})^{\otimes n})  
    & = \frac{1}{2^{n-k}}\sum_{\lambda(P) P \in {\mathcal S}} a_I^{\wt_I(P)}a_X^{\wt_X(P)}a_Y^{\wt_Y(P)}a_Z^{\wt_Z(P)} \lambda(P). \label{stabilizer-projection-signed-cwe}
\end{split}
\end{equation}
This expresses the probability of successful projection onto the stabilizer code, $\eta$, in terms of the \textbf{signed complete weight enumerator} of the stabilizer code $\mathcal S$, defined as,
\begin{equation}
    W_I(a_I,a_X, a_Y, a_Z) = \sum_{\lambda(P) P \in {\mathcal S}} a_I^{\wt_I(P)}a_X^{\wt_X(P)}a_Y^{\wt_Y(P)}a_Z^{\wt_Z(P)} \lambda(P). 
\end{equation}
Note that the $\eta$ must be non-negative for physical density matrices because it is a probability. 

Let $Q_L$ be any $n$-qubit Pauli operator a representative of a logical operator in $N(\mathcal S)/\mathcal S$. Then, we also have,
\begin{equation}
\begin{split}
   \tr (\Pi_{\mathcal{S}} \rho(\vec{a})^{\otimes n}Q_L ) & =  \frac{1}{2^{n-k}} \sum_{\lambda(P) P \in Q_L {\mathcal S}} a_I^{\wt_I(P)}a_X^{\wt_X(P)}a_Y^{\wt_Y(P)}a_Z^{\wt_Z(P)} \lambda(P), \\
   & \equiv 
    \frac{1}{2^{n-k}} W_L(a_I,a_X, a_Y, a_Z), 
\end{split}
\end{equation}
where $Q_L \mathcal S$ denotes the set of all Pauli operators representative of the logical operator $L$ in the stabilizer code. 

In this paper, we will primarily be interested in the states of the form $\rho_{T}(r)^{\otimes n}$, where $\rho_{T}$ is given in equation \eqref{twirled-state}.
For such states, the above expressions reduce to,
\begin{equation}
\begin{split}
    \eta = \tr (\Pi_{\mathcal S}\rho_{T}(r)^{\otimes n}) & = \frac{1}{2^{n-k}} \sum_{\lambda(P) P \in {\mathcal S}} \left(\frac{r}{\sqrt{3}}\right)^{\wt(P)} \lambda(P),
\end{split}
\end{equation}
and 
\begin{equation}
\begin{split}
    \tr (Q_L \Pi_{\mathcal S}\rho_{T}(r)^{\otimes n}) & = \frac{1}{2^{n-k}} \sum_{\lambda(P) P \in {Q_L \mathcal S}} \left(\frac{r}{\sqrt{3}}\right)^{\wt(P)} \lambda(P),
\end{split}
\end{equation}
where $\wt(P)$ denotes the Hamming weight of the Pauli operator $P$.
Using the notation of Rall \cite{rall2017signed}, we therefore define the \textbf{signed simple weight enumerators} of a stabilizer code, 
\begin{eqnarray}
    W_I(\bar{r}) & = & \sum_{\lambda(P) P \in {\mathcal S}} {\bar{r}}^{\wt(P)} \lambda(P), \\
    W_{L}(\bar{r}) & = & \sum_{\lambda(P) P \in {Q_L \mathcal S}} \bar{r}^{\wt(P)} \lambda(P), \label{signed-logical-swe}
\end{eqnarray}
where, for convenience, we define $\bar{r} \equiv r/\sqrt{3}$ following \cite{rall2017signed}.

\subsubsection*{Magic state distillation and signed weight enumerators}

With the above results in place, it is easy to see that signed weight enumerators characterize magic state distillation. For simplicity, we restrict attention to $n$-to-$1$ magic state distillation protocols that takes $n$ twirled noisy $\ket{T}$ states as input. The twirled $n$-qubit state, $\rho^{(n)}_{\rm in}=\rho_T(r_{\rm in})^{\otimes n}$ is purified by projecting it onto the codespace of an $[[n,1]]$ stabilizer code $\mathcal{S}$, and then decoding to obtain the normalized single-qubit density matrix, $\rho_{\rm out}$, which we write as,
\begin{equation}
    \rho_{\rm out}= \frac{1}{2} \left( {a}^{\rm out}_{I} I + {a}^{\rm out}_{X} X+ {a}^{\rm out}_{Y}Y + {a}^{\rm out}_{Z} Z\right) = \frac{1}{2}\sum_{L \in \{I,X,Y,Z\}} {a}^{\rm out}_L L.
\end{equation}
After normalizing by the probability of successful projection, each $a_L^{\rm out}$ is given by
\begin{equation}
    {a}_L^{\rm out} = \frac{W_L(\bar{r}_{\rm in})}{W_I(\bar{r}_{\rm in})}.
\end{equation}
 
For distillation by an $[[n,1]]$ $M_3$-code, we further have $W_X=W_Y=W_Z \equiv W_L$. The output magic state has ${a}^{\rm out}_X={a}^{\rm out}_Y={a}^{\rm out}_Z \equiv \frac{r_{\rm out}}{\sqrt{3}}$, and can be written in the form $\rho_T(r_{\rm out})$. The performance of the magic state distillation protocol is therefore encoded in the function, 
\begin{equation}
    \bar{r}_{\rm out}\equiv r_{\rm out}/\sqrt{3} = W_L(\bar{r})/W_I(\bar{r}). \label{r-msd}
\end{equation}
It is convenient to instead work with the error rate $\epsilon$; using $\bar{r}(\epsilon)=\frac{1-2\epsilon}{\sqrt{3}}$, we find,
\begin{equation}
    \epsilon_{\rm out} = \frac{W_I(\bar r(\epsilon_{\rm in}))-\sqrt{3} W_L(\bar r(\epsilon_{\rm in}))}{2 W_I(\bar r(\epsilon_{\rm in}))}. \label{epsilon-msd}
\end{equation}
\subsection{The structure of $M_3$-codes}
\label{sec:M3-codes}
Before we proceed further, we need to review some technical properties of $M_3$-codes that we will use. These results are based on \cite{macwilliams1978self, calderbank1998quantum, MSD, rall2017signed}.

The connection between stabilizer codes and classical coding theory is established through the mapping of Pauli operators to vectors over the finite field $GF(4) = \{0, 1, \omega, \omega^2\}$, where $\omega^2 = \omega + 1$. Following \cite{calderbank1998quantum}, we map the single-qubit Pauli group (modulo phases) to $GF(4)$ via the isomorphism:
\begin{equation}
    I \to 0, \quad X \to 1, \quad Z \to \omega, \quad Y \to \omega^2.
\end{equation}
An $n$-qubit Pauli operator $P$ maps to a vector $v \in GF(4)^n$. Under this mapping, the commutation relation between two Pauli operators corresponds to the Hermitian inner product of their vectors.

For a general stabilizer code, the stabilizer group maps to an additive code over $GF(4)$ that is self-orthogonal with respect to this inner product. However, the additional symmetry of $M_3$-codes—specifically, that $[M_3^{\otimes n}, \mathcal{S}] = 0$—restricts the structure further. Since conjugation by $M_3$ corresponds to multiplication by $\omega$ in $GF(4)$ (cycling $X \to Y \to Z \to X$ corresponds to $1 \to \omega^2 \to \omega \to 1$), the code must be closed under scalar multiplication by $\omega$. Consequently, $M_3$-codes correspond to \textit{linear} subspaces of $GF(4)^n$, rather than merely additive ones. This implies that the number of stabilizers $n-k$ must be even for any $[[n,k]]$ $M_3$-code. In particular, $[[n,0]]$ $M_3$-codes only exist for even $n$, and correspond to linear classical self-dual codes over $GF(4)$; and $[[n,1]]$ $M_3$-codes only exist for odd $n$, and correspond to maximal self-orthogonal codes over $GF(4)$. 

An example of an $[[n,0]]$ $M_3$-code is the six qubit perfect tensor \cite{pastawski2015holographic}, whose stabilizers, 
\begin{equation}
    \begin{split}
     \{   +ZXXZII, +XYYXII,  +IXYYXI,  +IZXXZI, +IIXYYX, +IIZXXZ\},
    \end{split}
\end{equation}
correspond to a classical linear self-dual $[6,3,4] _{GF(4)}$ code known as the hexacode \cite{macwilliams1978self}. Examples of $[[n,1]]$ $M_3$-codes include the 5-qubit code and the 7-qubit Steane code.

For a classical code $\mathcal{C}$ of length $n$, corresponding to the stabilizers $\mathcal{S}$ of the quantum code, we define the \textbf{simple weight enumerator} \cite{MacWilliamsSloane5} as the homogeneous polynomial:
\begin{equation}
    A(x,y) = \sum_{c \in \mathcal{C}} x^{n-\wt(c)} y^{\wt(c)} = \sum_{j=0}^n A_j x^{n-j} y^j,
\end{equation}
where $A_j$ is the number of codewords of Hamming weight $j$. Similarly, let $B(x,y)$ denote the weight enumerator of the dual code $\mathcal{C}^\perp$, which corresponds to $N(\mathcal S)$.
These are related by the MacWilliams identity \cite{MacWilliamsSloane5}:
\begin{equation}
    B(x,y) = \frac{1}{2^{n-k}} A(x+3y, x-y), \label{eq:classical-macwilliams}
\end{equation}
This polynomial characterizes the distribution of weights in the classical code. We also define
\begin{equation}
    C(x,y)= \sum_{c \in \mathcal C^\perp/\mathcal C} x^{n- \wt(c)}y^{\wt(c)} =B(x,y)-A(x,y) ,
\end{equation}
which is the weight enumerator for logical operators of the stabilizer code, $N(\mathcal S)/\mathcal S$.

For a stabilizer code $\mathcal S$ with the simplest choice of signs $\lambda(P)=+1$ for all $P \in \mathcal S$, the signed weight enumerator $W_I(\bar{r})$ is equal to the unsigned weight enumerator $A(1,r)$. However, more generally, and particularly in the context of magic state distillation, we encounter stabilizer codes $\mathcal S$ that contain some Pauli operators with $\lambda(P)=-1$. For such codes, unsigned classical weight enumerators differ from the signed weight enumerators defined in section \ref{sec:stabilizer-projection}.

Crucially, for $M_3$-codes, the signs of the stabilizer generators are not free parameters; they are fixed by the weights of the Pauli operators by the following theorem, proven by Rall \cite{rall2017signed}, which we refer to as \textit{Rall's rule}.

\begin{theorem}[\textbf{Rall's rule}]
\label{thm:ralls-rule}Let $\mathcal S$ be an $M_3$-code. Then, for any $P \in S$,
\begin{enumerate}
\item ${\rm wt}~P$ is even.
\item If ${\rm wt }~P \equiv 0 \pmod 4$, $\lambda(P)=+1$.
\item If ${\rm wt }~P \equiv 2 \pmod 4$, $\lambda(P)=-1$.
\end{enumerate}
\end{theorem}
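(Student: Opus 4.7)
The plan is to exploit the threefold product $P\cdot \sigma(P)\cdot \sigma^2(P)$, where $\sigma$ denotes the cyclic permutation $I\mapsto I$, $X\mapsto Y$, $Y\mapsto Z$, $Z\mapsto X$ of single-qubit Pauli letters induced by conjugation with $M_3$. Because $M_3$ implements exactly this permutation with no additional phases, conjugation by $M_3^{\otimes n}$ sends each positive Pauli string $P\in \mathcal P_n^+$ to the positive Pauli $\sigma(P)$, with no extra sign. The hypothesis $[M_3^{\otimes n},\Pi_{\mathcal S}]=0$ then says that if $\lambda(P)P\in\mathcal S$ then $\lambda(P)\sigma(P)\in\mathcal S$, which forces $\lambda(\sigma(P))=\lambda(P)$ and similarly $\lambda(\sigma^2(P))=\lambda(P)$.

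Next, I would evaluate $P\cdot \sigma(P)\cdot \sigma^2(P)$ one tensor factor at a time. An identity position contributes $I$, while each non-identity position contributes one of $XYZ$, $YZX$, or $ZXY$, each equal to $iI$ by the Pauli relation $XY=iZ$ and its cyclic partners. Hence
\begin{equation*}
   P\cdot \sigma(P)\cdot \sigma^2(P) = i^{\wt(P)}\,I.
\end{equation*}

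Since $\mathcal S$ is closed under products, the signed product $\big(\lambda(P)P\big)\big(\lambda(\sigma(P))\sigma(P)\big)\big(\lambda(\sigma^2(P))\sigma^2(P)\big)$ must itself be a signed element of $\mathcal S$. Using the first step to collapse all three $\lambda$'s to $\lambda(P)$ and the second step to evaluate the Pauli product, this element equals $\lambda(P)\,i^{\wt(P)}\,I$. A stabilizer group defining a nontrivial codespace cannot contain $-I$ yet must contain $+I$, so we require $\lambda(P)\,i^{\wt(P)}=+1$. This is impossible unless $i^{\wt(P)}=\pm 1$, giving statement (1) that $\wt(P)$ is even; splitting on $\wt(P)\bmod 4$ then yields $\lambda(P)=+1$ when $\wt(P)\equiv 0\pmod 4$ and $\lambda(P)=-1$ when $\wt(P)\equiv 2\pmod 4$, which are statements (2) and (3).

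The one subtlety I would check carefully is the phase-freeness claimed in the first paragraph: that on positive Paulis, $M_3^{\dagger}(\cdot)M_3$ really realizes the permutation $\sigma$ with no stray scalar. This follows because $M_3$ is an actual (not merely projective) order-three Clifford element whose action on $\{X,Y,Z\}$ is precisely the $3$-cycle $(X\,Y\,Z)$, so the conjugation on a tensor product of positive single-qubit Paulis is again a positive Pauli without any accumulated $i$ or $-1$ factor. Everything else in the argument is elementary group theory of the Pauli group combined with the basic axiom that a stabilizer code excludes $-I$.
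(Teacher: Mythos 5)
Your proof is correct and follows essentially the same route as the paper: both arguments rest on the facts that conjugation by $M_3^{\otimes n}$ maps $\lambda(P)P$ to $\lambda(P)\sigma(P)$ with no extra phase, and that the cyclic Pauli relations give $P\,\sigma(P)\,\sigma^2(P)=i^{\wt(P)}I$ (the paper's $PP'=i^{\wt(P)}P''$). The only cosmetic difference is that you extract both the parity of $\wt(P)$ and the sign $\lambda(P)=i^{\wt(P)}$ at once from the triple product together with $-I\notin\mathcal S$, whereas the paper obtains parity from $[\tilde P,\tilde P']=0$ and the sign from identifying $\tilde P''=\tilde P\tilde P'$.
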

\begin{proof} 
For any $\tilde{P}\in \mathcal P_n$ define $\tilde{P}'={M_{3}^{\dagger}}^{\otimes n}\tilde{P}M_{3}^{\otimes n}$ and $\tilde{P}''={M_{3}^{\dagger}}^{\otimes n}\tilde{P}'M_{3}^{\otimes n}$. Consider any $\tilde{P}\in \mathcal S$. Write $\tilde{P}=\lambda(P)P$ for $P \in \mathcal P_n^+$. $\tilde{P}'=\lambda(P)P'$ and $\tilde{P}''=\lambda(P)P''$ must also be in $\mathcal S$ since $\mathcal S$ is an $M_3$-code. Using $XY=iZ$, $YZ=iX$ and $ZX=iY$, it follows that $PP'=i^{\wt(P)}P''$, and $P'P=(-i)^{\wt(P)}P''$.
\begin{enumerate}
    \item  Demanding $[\tilde{P},\tilde{P'}]=0$ implies $\wt(P)$ is even.
    \item Using $\tilde{P}''= \lambda(P)P''$ and $\tilde{P}''=\tilde{P} \tilde{P}'=\lambda(P)^2i^{\wt(P)} P''=i^{\wt(P)} P''$, we see that $\lambda(P)=i^{\wt(P)}$.
\end{enumerate}
\end{proof}
Note also that the total number of stabilizers of any non-zero weight must be divisible by 3. Another useful property of $M_3$-codes we will use is the following lemma. 
\begin{lemma}
\label{logical-operator-weight-n-lemma}
Every $[[n,1]]$ $M_3$-code possesses a logical operator of weight $n$.
\end{lemma}
\begin{proof}
Let $A(x,y) = \sum A_j x^{n-j}y^j$ be the simple weight enumerator of the classical code corresponding to the stabilizer group $\mathcal{S}$. By Rall's rule, all stabilizers have even weight, so $A_j = 0$ for all odd $j$.

The weight enumerator of the dual code, $B(x,y) = \sum B_j x^{n-j}y^j$, determines the weights of operators in the normalizer $N(\mathcal{S})$. Using the MacWilliams identity (Eq. \eqref{eq:classical-macwilliams}), the number of weight-$n$ operators in the dual is given by the coefficient of $y^n$:
\begin{equation}
    B_n = B(0,1) = \frac{1}{2^{n-1}} A(3, -1).
\end{equation}
Expanding the polynomial $A(3, -1)$:
\begin{equation}
    B_n = \frac{1}{2^{n-1}} \sum_{j=0}^n A_j 3^{n-j} (-1)^j.
\end{equation}
Since $A_j = 0$ for all odd $j$, the term $(-1)^j$ is always equal to $+1$ for all non-zero $A_j$. Furthermore, since $A_0=1$ and $A_j \geq 0$, the sum is strictly positive. Thus, $B_n > 0$.

Since $n$ must be odd for any $[[n,1]]$ $M_3$-code, this weight-$n$ operator cannot be a stabilizer (as stabilizers must have even weight). Therefore, it must be a logical operator.
\end{proof}

\section{Magic state distillation and unsigned weight enumerators}
\label{sec:signed-from-simple}
Signed weight enumerators are relatively mysterious compared to their unsigned counterparts which have been extensively studied both in classical and quantum coding theory, see \cite{nebe2006self,rains2002self, shor1997quantum, rains1998quantum,rains1999quantum,rains2000polynomial}. In this section, we will show that for the family of $M_3$-codes, the signed weight enumerator is entirely determined by the unsigned simple weight enumerator. This allows us to characterize the distillation performance of these codes using standard results from classical coding theory.

%For qudits of odd prime dimension, using a formalism based on discrete Wigner functions,  \cite{prakash2024search} related the performance of a stabilizer code for magic state distillation to its unsigned complete weight enumerator. \cite{prakash2024search} further showed that the unsigned simple weight enumerator characterizes the performance of a distillation routine for a particular qutrit magic state known as the strange state \cite{PhysRevA.83.032310,veitch2014resource,jain2020qutrit, 2020golay}.  In order to characterize magic state distillation via unsigned weight enumerators  \cite{prakash2024search} demanded that a particular Clifford gate, the square of the single-qudit Hadamard operator, be exactly transversal for the stabilizer code used for distillation. This condition fixes the eigenvalues of all stabilizers in the code to be $+1$. For qubits, the square of the Hadamard gate is the identity, so we cannot demand its transversality to fix the eigenvalues of all stabilizers in a code. As we will see, demanding transversality of another single-qubit Clifford gate, the $M_3$ gate defined below, determines the eigenvalues of each stabilizer in a code in terms of its Hamming weight, thus allowing one to express the signed weight enumerator in terms of unsigned weight enumerators. We then present the relation to simple weight enumerators, and use the result to carry out a somewhat exhaustive search for $\ket{T}$-state distillation routines.

\subsection{Signed enumerators from unsigned enumerators}
\label{simple-from-signed}
Rall's rule (Theorem \ref{thm:ralls-rule}) implies that for any $M_3$-code, the phase of a stabilizer is determined by $\lambda(P) = i^{\wt(P)}$. This allows us to map the simple weight enumerator $A(x,y)$ of the classical code $\mathcal C$ directly to the signed weight enumerator of the quantum code $\mathcal S$.

\begin{corollary}
Let $\mathcal S$ be an $M_3$-code and $A(x,y)$ be the simple weight enumerator of the corresponding classical $GF(4)$ code. The signed weight enumerator of the $M_3$-code is:
\begin{equation}
    W_I(\bar{r}) = A(1, i\bar{r}).
\end{equation}
\end{corollary}
\noindent This follows immediately by substituting $y \to i\bar{r}$ into the polynomial $A(1,y)$, effectively applying the factor $(i)^{\wt(P)}$ to terms of weight $\wt(P)$. In other words,  the probability of successfully projecting $n$ noisy magic states onto the codespace of an $M_3$-code is determined by its simple weight enumerator. 

We next wish to relate $W_L(\bar{r})$ to the unsigned classical weight enumerator $C(x,y)$. This requires us to determine the sign of each Pauli operator $P \in N(\mathcal S)/\mathcal S$. In order to do this, we must classify $[[n,1]]$ $M_3$-codes into two types, based on the following lemma. 
\begin{lemma}
    Any magic state distillation routine based on an $[[n,1]]$ $M_3$-code is equivalent to one in which $M_3^{\otimes n}$ acts as $\overline{M}_3$ or $\overline{M}_3^{-1}$ on the encoded qubit.  \label{lemma:two-classes}
\end{lemma}
\begin{proof}
Note that two $M_3$-codes are equivalent for magic state distillation if they differ only by local $M_3$ unitaries, as the twirled input state $\rho_T^{\otimes n}$ is invariant under such transformations. Since every $[[n,1]]$ $M_3$-code contains a logical operator of weight $n$ (Lemma \ref{logical-operator-weight-n-lemma}), we can always apply local $M_3$ corrections to map this operator to $X^{\otimes n}$. Consequently, without loss of generality, we assume the logical operators are $\pm X^{\otimes n}$, $\pm Y^{\otimes n}$, and $\pm Z^{\otimes n}$. $M_3^{\otimes n}$ cyclically permutes these logical operators, and therefore acts as either logical $\overline{M}_3$ or $\overline{M}_3^\dagger$. 

We can achieve either possibility via the following choices of logical operators,
\begin{enumerate}
    \item For $M_3^{\otimes n} =\overline{M}_3$, choose
    \begin{equation}
        \overline{X}=\lambda_+ X^\on, \quad \overline{Y}=\lambda_+ Y^\on, \quad \overline{Z}=\lambda_+Z^\on, \quad {\rm with }~\lambda_+=(-1)^{(n-1)/2}. \label{eq:type1-phase}
    \end{equation} 
    \item For $M_3^{\otimes n} =\overline{M}_3^\dagger$, choose,
    \begin{equation}
        \overline{X}=\lambda_{-} Y^\on, \quad \overline{Y}=\lambda_{-} X^\on, \quad \overline{Z}=\lambda_-  Z^\on, \quad {\rm with }~\lambda_-=(-1)^{(n+1)/2}.  \label{eq:type5-phase}
    \end{equation}  
\end{enumerate}
\end{proof}

If $M_3^{\otimes n}$ acts as $\overline{M}_3$, we say the code is a \textbf{type 1} $M_3$-code, and if $M_3^{\otimes n}$ acts as $\overline{M}_3^\dagger$, we say the code is a \textbf{type 5} $M_3$-code. The reason for this terminology is as follows. Suppose we demand that our magic state distillation routine produce a pure $\ket{T}$ state as output, when given $n$ pure $\ket{T}$ states as input, with non-zero success probability. Then, using the fact that $M_3\ket{T}=e^{\pi i/3} \ket{T}$, we have \begin{eqnarray}
   M_3^{\otimes n} (\Pi_{\mathcal S} \ket{T}^{\otimes n}) = e^{n\pi i/3}  (\Pi_{\mathcal S} \ket{T}^{\otimes n}) = e^{n\pi i/3} \ket{\bar{T}}.  
\end{eqnarray}
We see that we must restrict attention to $[[n,1]]$ $M_3$-codes of length $n \equiv \pm 1 \pmod 6$. Furthermore, if $n \equiv 1 \pmod 6$, we should choose our logical operators such that $M_3^{\otimes n} = \overline{M}_3$, and, if $n \equiv 5 \pmod 6$, we should choose our logical operators such that $M_3^{\otimes n} = \overline{M}_3^{-1}$.

Let us now relate $W_L(\bar{r})$ to the unsigned classical weight enumerator $C(x,y)$. Suppose a logical operator $L$ of weight $n$ has sign $\lambda(L)=\pm 1$. Using Rall's rule, the sign of each $P$ in $N(\mathcal S)/\mathcal S$ as follows, 
\begin{equation}
    \lambda(P) = \begin{cases}
    \lambda(L) & \wt(P) =n \pmod 4 \\
    -\lambda(L) & \wt(P) = n+2 \pmod 4
    \end{cases}
\end{equation}
Therefore, for the choices of logical operators in Equations \eqref{eq:type1-phase} and \eqref{eq:type5-phase}, we have, 
\begin{equation}
    3 W_L(\bar{r}) =i^{-n} \lambda_\pm  C(1,i \bar{r})= 
    \begin{cases}
        -i C(1,i \bar{r})   & M_3^{\otimes n} = \overline{M}_3 \\
        +i C(1,i \bar{r}) & M_3^{\otimes n} = \overline{M}_3^\dagger. \\
    \end{cases}
\end{equation} 
Let us define $\hat{\lambda}_\pm$ which distinguishes between type 1 and type 5 codes,
\begin{equation}
    \hat{\lambda}_\pm = \begin{cases}
        -1 & M_3^{\otimes n} = \overline{M}_3 \\ 
        +1 & M_3^{\otimes n} = \overline{M}_3^\dagger.
    \end{cases}
\end{equation}

Thus, Equation \eqref{r-msd} characterizing the performance of a magic state distillation routine reduces to,
\begin{equation}
\frac{r_{\rm out}}{\sqrt{3}} = \frac{i}{3} \hat{\lambda}_\pm \frac{B(1,i\bar{r}_{\rm in})-A(1,i\bar{r}_{\rm in})}{A(1,i\bar{r}_{\rm in})}.
\end{equation}
Or, in terms of the error-rate, $\epsilon$,
\begin{equation}
\epsilon_{\rm out} = \frac{M^{(\pm)}(\epsilon_{\rm in})}{2N(\epsilon_{\rm in})} \label{error-output}
\end{equation}
where,
\begin{equation}
    M^{(\pm)}(\epsilon) =A\left(1,i \bar{r}(\epsilon)\right)-i \hat{\lambda}_\pm \frac{B(1,i\bar{r}(\epsilon))-A(1, i \bar{r}(\epsilon))}{\sqrt{3}}
\end{equation}
and,
\begin{equation}
  N(\epsilon) = A(i\bar{r}(\epsilon)).  
\end{equation}

\subsection{The 5-qubit code and other examples}
Let us now present an example that illustrates the difference between signed weight enumerators and simple weight enumerators:
\begin{example}
Consider the stabilizers $\mathcal S_{1}$ and $\mathcal S_{2}$, 
\begin{eqnarray}
\mathcal S_{1} & = & \{XX,~ZZ,~-YY,~II\} \\
\mathcal S_{2} & =& \{-XX,~-ZZ,~-YY,~II\}    
\end{eqnarray}
which are $[[2,0]]$ stabilizer codes that describe entangled stabilizer states.
The signed weight enumerators for the codes are:
\begin{equation}
W^{S_{1}}_{I}(\bar{r})=1+(2-1)\bar{r}^{2}=1+\bar{r}^{2},
\end{equation}
and,
\begin{equation}
W^{S_{2}}_{I}(\bar{r})=1-3\bar{r}^{2}.
\end{equation}
The simple weight enumerator for both the codes is: 
\begin{equation}
A(x,y)=x^{2}+3y^{3}.
\end{equation}

The stabilizer code $\mathcal S_{1}$ does not obey Rall's rule, and is therefore not an $M_{3}$-code, while $\mathcal S_{2}$ is an $M_3$-code, and therefore does obey Rall's rule.  Explicitly, $\mathcal S_{1}$ describes the state $\ket{S_1}=\frac{1}{\sqrt{2}} \left( \ket{00}+\ket{11}\right)$ and $\mathcal S_2$ describes the state $\ket{S_2}=\frac{1}{\sqrt{2}} \left(\ket{01}-\ket{10}\right)$. One can verify that $M_3\otimes M_3 \ket{S_2}=\ket{S_2}$, while $M_3 \otimes M_3 \ket{S_1} = -\frac{1}{\sqrt{2}} \left( \ket{01}+\ket{10}\right)$.
\end{example}
Next let us consider the five qubit code  \cite{MSD} and show how the noise suppression can be related to the signed and simple weight enumerators:
\begin{example}
The five qubit code is generated by the Pauli operators: \begin{equation}
+XZZXI,~+IXZZX, ~+XIXZZ, ~+ZXIXZ.
\end{equation} 
For this code, $A(1,y)=1+15y^4$, and $B(1,y)=1+30y^3+15y^4+18y^5$. As this is an $M_3$-code,
one obtains,
\begin{equation}
W_I(\bar{r})=A(1,i\bar{r}) = 1+15\bar{r}^4.
\end{equation}
Using the choice of logical operators in Equation \eqref{eq:type5-phase}, we obtain,
\begin{equation}
 \begin{split}
    3 W_L(\bar{r}) & =i\hat{\lambda}_-  C(1,i \bar{r})=i\left( 30 (i \bar{r})^3+18 (i\bar{r})^5 \right), \\
        & = 30 \bar{r}^3-18 \bar{r}^5.
    \end{split}  
\end{equation}
This gives the relation,
\begin{equation}
\begin{split}
   \epsilon_{\rm out}(\epsilon) & = \frac{15 \bar{r}^4-\sqrt{3} \left(10 \bar{r}^3-6 \bar{r}^5\right)+1}{30 \bar{r}^4+2} \\
   & = \frac{\epsilon ^2 \left(5-15 \epsilon +15 \epsilon ^2-4 \epsilon
   ^3\right)}{1-5 \epsilon +15 \epsilon ^2-20 \epsilon ^3+10 \epsilon ^4},  \label{5-qubit-distillation}
   \end{split}
\end{equation}
which agrees with Equation \eqref{msd-5}.
\end{example}

\subsection{An exhaustive search over all $[[n,1]]$ $M_3$-codes for $n<20$}

\begin{figure}
    \centering
    \includegraphics[width=0.7\linewidth]{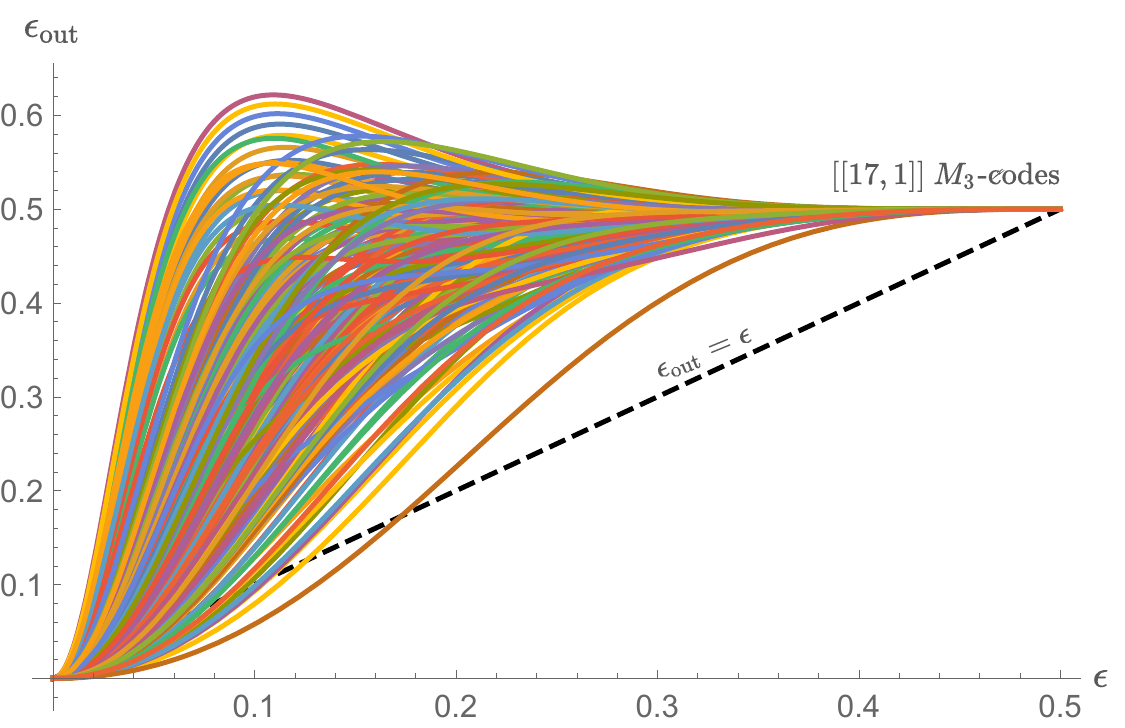}

    \includegraphics[width=0.7\linewidth]{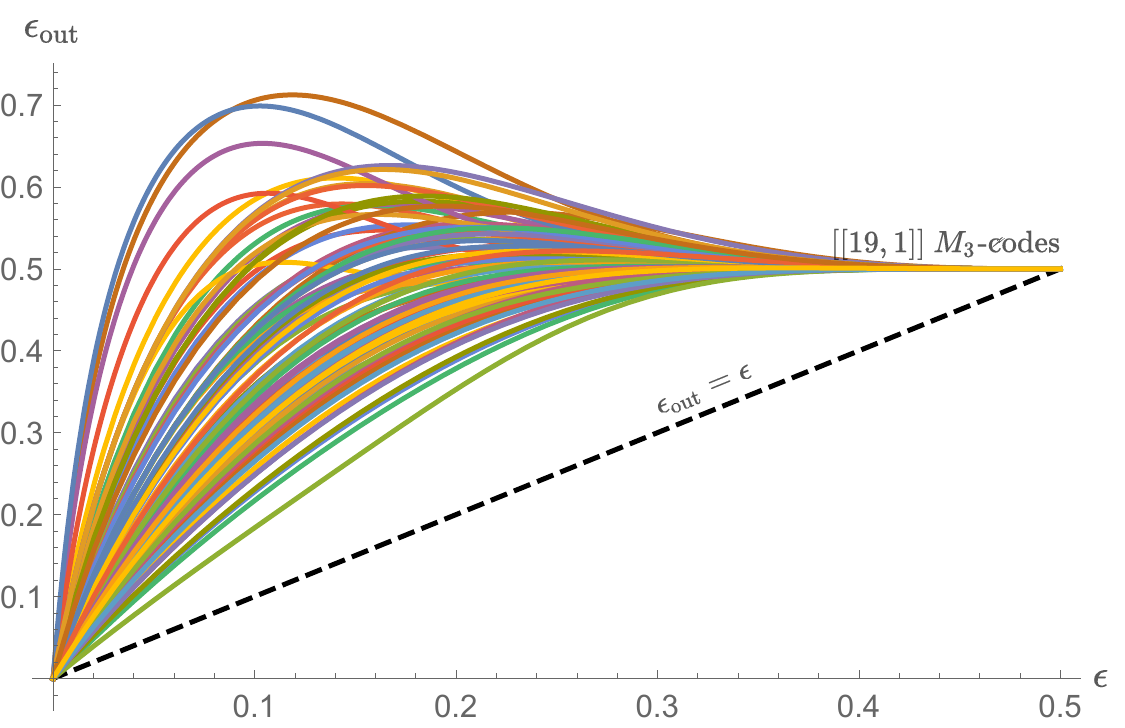}
    \caption{$\epsilon_{\rm out}=f_{\rm MSD}(\epsilon)$ computed for all $[[n,1]]$ $M_3$-codes with $n=17$ (above) and $n=19$ (below) with nonzero success probability. Solid colored lines denote $f_{\rm MSD}(\epsilon)$ computed for various $M_3$-codes, and the dashed black line is $\epsilon_{\rm out}=\epsilon_{\rm in}$. While many $[[17,1]]$ codes distill the $\ket{T}$ state with quadratic noise suppression, no code has threshold exceeding that of the 5-qubit code, $0.172673$.}
    \label{fig:search-results}
\end{figure}

Given the above results, it is straightforward to carry out a computational search for magic state distillation routines of a given length that outperform the $5$-qubit code. Here we present a search for all codes with size $n<20$. Other authors have mentioned carrying out smaller searches \cite{MSD, reichardt2005quantum} but to our knowledge, these were not claimed to be exhaustive, and \cite{rall2017signed} only searched over $M_3$-codes of length $n \leq 7$.

As mentioned above, the stabilizers of any $[[n,1]]$ $M_3$-code form a maximal self-orthogonal linear code over $GF(4)$ of odd length $n$. Any such code can be obtained by shortening a self-dual linear code over $GF(4)$ at one coordinate. All linear self-dual codes over $GF(4)$ with $n \leq 20$ have been classified in a series of papers over the past few decades \cite{macwilliams1978self, Conway1979, harada2010classification, harada2011classification}, and are conveniently available at \cite{munemasa_codes_website}. Using MAGMA \cite{magma}, we computed all inequivalent ways of shortening these codes to obtain all inequivalent $[[n,1]]$ $M_3$-codes of length $n<20$. For each such code, we computed the weight enumerator and then the threshold using Equation \eqref{error-output}. { The computation took less than a few minutes on an ordinary desktop computer.}

Results for $n=17$ and $n=19$ using codes classified only somewhat recently (in 2010 and 2011) by Harada et. al.  \cite{harada2010classification, harada2011classification} are shown in Figure \ref{fig:search-results}. We found that no codes have a threshold which exceeds that of the 5-qubit code.

While we hope to extend this computational search to codes of larger length in the near future, in the next section, we will attempt to use Equation \eqref{error-output} to place more general constraints on magic state distillation routines.

\section{Quantum constraints on weight enumerators}
\label{sec:constraints}

Any linear self-orthogonal code over $GF(4)$ defines an $M_3$ stabilizer code. In particular, a self-dual code over $GF(4)$ defines an $[[n,0]]$ $M_3$-state; and a maximal self orthogonal code, defines an $[[n,1]]$ $M_3$-code. It is a celebrated result of classical coding theory that the weight enumerators of such codes are not arbitrary, and must satisfy certain non-trivial constraints, which arise from (1) demanding invariance under MacWilliams transform, and (2) demanding non-negativity of the coefficients of the weight enumerator. We refer to such constraints as ``classical constraints'', and we review them in detail in the next subsection.  

We will then derive a new family of ``quantum constraints'' on these weight enumerators, which arise from demanding consistency with their physical interpretation as potential magic state distillation routines. These constraints are independent of classical constraints, and severely constrain the space of consistent weight enumerators. 

\subsection{Review of classical constraints on weight enumerators}
\label{sec:invariant-theory}

\subsubsection{Self-dual codes}
The weight enumerator of a linear self-dual code over $GF(4)$, $A(x,y)$ must be invariant under the MacWilliams transform, $A(x,y) = A(\frac{x+3y}{2}, \frac{x-y}{2})$. Additionally, $n$ must be even, so $A(x,y)=A(-x,-y)$, and all codewords must have even weight, so $A(x,y)=A(x,-y)$.
The theory of polynomial invariants \cite{sloane1977error, macwilliams1978self, nebe2006self} dictates that the most general polynomial $A(x,y)$ invariant under these operations is given by the following Theorem.
\begin{theorem} \label{thm:self-dual-invariant-theory}
    The weight enumerator of an $[[n,0]]$ $M_3$-state must be an element of the ring $\mathbb{C}[f,g]$, generated by the primary invariants:
\begin{align}
    f &= x^2 + 3y^2, \\
    g &= y^2(x^2-y^2)^2.
\end{align}
Explicitly, 
\begin{equation}
    A(x,y) = \sum_{j=0}^{\floor{\frac{n}{6}}} c_j {f}(x,y)^{\frac{n}{2}-3j} {g}(x,y)^j, \label{eq:invariant-theory-self-dual}
\end{equation}
\end{theorem}
This reduces the dimensionality of the space of weight enumerators for a code of length $n$ from $n/2+1$ to $ \lfloor n/6 \rfloor + 1$. For example, for $n=12$, the space of all possible weight enumerators is a 3-dimensional space spanned by $c_0$, $c_1$ and $c_2$. Of course, we can always set $c_0=1$ by demanding $A_0=1$.

However, invariant theory only guarantees that the polynomial has the correct symmetries. To represent a valid code, the coefficients $A_w$ (which count the number of codewords of weight $w$) must be non-negative integers. One can extract $A_w$ by expanding Equation \eqref{invariant-theory-self-dual} in powers of $y$:
\begin{equation}
    A_{w} = \frac{1}{w!} \left[\frac{d^w}{dy^w}\sum_{j=0}^{\floor{n/6}} c_j f^{n/2-3j}g^j\right]_{y=0} \geq 0. \label{eq:classical-constraints}
\end{equation}
We refer to these constraints as \textbf{classical constraints}, and the region of parameter space $\{c_j\}$ satisfying these constraints and the \textbf{classical feasible region}. 

Similar expressions can be derived for various families of self-dual and self-orthogonal codes; one then attempts to find candidate weight enumerators for codes with maximum distance that satisfy these classical constraints using linear programming techniques \cite{MALLOWS1973188, macwilliams1978self, conway1990new, rains,  rains2003new}.

\subsubsection{Maximal self-orthogonal codes}
It is also possible to constrain the weight enumerators of maximal self-orthogonal codes using invariant theory, as demonstrated by Mallows and Sloane \cite{mallows1974weight, mallows1981weight}. For $[[n,1]]$ $M_3$-codes, which correspond to maximal self-orthogonal $GF(4)$ codes, the expression for $A(x,y)$ in terms of invariants is derived in Chapter 10 of the monograph by Nebe, Rains and Sloane \cite{nebe2006self}, and is slightly more complicated.
\begin{theorem}(Theorem 10.5.1 of Nebe, Rains and Sloane \cite{nebe2006self})
The simple weight enumerator of any $[[n,1]]$ $M_3$-code must be of the form:
\begin{equation}
A_{MSO}(x,y) = x \underbrace{\sum_{j=0}^{\floor{(n-1)/6}}c_j'  f(x,y)^{\frac{n-1}{2}-3 j}  g(x,y)^{j}}_{S_{1}} + xy^2 (x^2-y^2) \underbrace{\sum_{j=0}^{\floor{(n-5)/6}} d_j'  f(x,y)^{\frac{n-5}{2} -3j}  g(x,y)^{j}}_{S_{2}}, 
\label{Aeq}
\end{equation}
where the $c_j'$ and $d_j'$ are arbitrary coefficients. \label{invariant-theorem}
\end{theorem}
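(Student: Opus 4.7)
The plan is to apply the Gleason--Pless--Sloane--Rains invariant-theoretic framework: identify a finite linear group $G \subset GL_2(\mathbb{C})$ whose induced action on $\mathbb{C}[x,y]$ forces valid weight enumerators to lie in a specific $\mathbb{C}[\hat f, \hat g]$-submodule, then verify the description by a dimension count.

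First I would reduce the problem to a single eigenvalue equation. Rall's rule forces $A(x,-y) = A(x,y)$, so $A$ is invariant under $E : y \mapsto -y$. The Hermitian MacWilliams identity gives the dual enumerator as $B = 2\,T_0 A$, where $T_0 : (x,y) \mapsto ((x+3y)/2, (x-y)/2)$ is the projective MacWilliams involution; because $\mathcal{C}$ is $GF(4)$-linear, the three nontrivial cosets of $\mathcal{C}$ in $\mathcal{C}^\perp$ share a common simple weight enumerator $D$, giving $B = A + 3D$. The forthcoming Lemma~\ref{logical-operator-weight-n-lemma} forces all logical operators to have odd weight, so $D(x,-y) = -D(x,y)$. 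Writing $P_E := (I+E)/2$ for the even-in-$y$ projector, these conditions combine to the single requirement $P_E T_0 A = A/2$; equivalently, $A$ is a $(+\tfrac{1}{2})$-eigenvector of the operator $\mathcal{S} := P_E T_0$ acting on the ring $\mathbb{C}[x, y^2]$ of even-in-$y$ polynomials.

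Next I would invoke Gleason's theorem. A direct computation shows $T_0 E$ has order $6$, so $G := \langle T_0, E \rangle$ is dihedral of order $12$, and Gleason's theorem for type-$4^H$ codes gives $\mathbb{C}[x,y]^G = \mathbb{C}[\hat f, \hat g]$. Because $\hat f, \hat g$ are $G$-invariant they satisfy $\mathcal{S} \hat f = \hat f$ and $\mathcal{S} \hat g = \hat g$. A short direct calculation verifies $\mathcal{S}(x) = x/2$ and $\mathcal{S}(xy^2(x^2-y^2)) = \tfrac{1}{2}\, xy^2(x^2 - y^2)$. Since multiplication by $\hat f^a \hat g^b$ commutes with $\mathcal{S}$, every element of the module $M := x\, \mathbb{C}[\hat f, \hat g] + xy^2(x^2-y^2)\, \mathbb{C}[\hat f, \hat g]$ is a $(+\tfrac{1}{2})$-eigenvector of $\mathcal{S}$, establishing containment of $M$ in the space of admissible weight enumerators.

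Finally I would close the argument by matching dimensions. Since $y^2(x^2-y^2) \notin \mathbb{C}[\hat f, \hat g]$, the two summands defining $M$ are independent, so $M$ has Hilbert series $(t + t^5)/((1-t^2)(1-t^6))$ and dimension $\lfloor (n-1)/6 \rfloor + \lfloor (n-5)/6 \rfloor + 2$ at degree $n$. The dimension of the $(+\tfrac{1}{2})$-eigenspace of $\mathcal{S}$ equals the multiplicity of the standard two-dimensional representation of $G$ in $\mathbb{C}[x,y]_n$, since a case check on the four one-dimensional and two two-dimensional irreducibles of $G$ shows that only this standard representation contributes an $E$-fixed vector with $\mathcal{S}$-eigenvalue $\tfrac{1}{2}$ (the trivial and sign-of-rotation irreducibles give $\pm 1$, the other two sign characters are killed by $P_E$, and the remaining two-dimensional irreducible gives $-\tfrac{1}{2}$). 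The main obstacle will be evaluating this multiplicity via Molien's formula, $\tfrac{1}{12} \sum_g \chi_{\mathrm{std}}(g) / \det(I - tg)$ summed over the six conjugacy classes of $G$ with character values $\chi_{\mathrm{std}} = (2, 1, -1, -2, 0, 0)$; once this generating function is checked to equal $(t + t^5)/((1-t^2)(1-t^6))$, containment plus equal Hilbert series force $M$ to coincide with the full $(+\tfrac{1}{2})$-eigenspace, proving $A \in M$.
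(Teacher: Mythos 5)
The paper does not prove this statement itself: it quotes Theorem 10.5.1 of \cite{nebe2006self} and merely remarks that it can be derived, as in \cite{mallows1974weight,mallows1981weight}, from consistency with the MacWilliams identity (using $\hat f(x+3y,x-y)=2^2\hat f$, $\hat g(x+3y,x-y)=2^6\hat g$). Your proposal is essentially that standard Gleason-type derivation carried out in full, and its skeleton is sound: the eigenvector condition $P_E T_0 A = A/2$ correctly encodes evenness of $A$ together with $B=2T_0A=A+3D$ and oddness of $D$; the computations $\mathcal{S}(x)=x/2$, $\mathcal{S}\bigl(xy^2(x^2-y^2)\bigr)=\tfrac12 xy^2(x^2-y^2)$ check out; the isotypic case analysis for the dihedral group of order $12$ is right; and the multiplicity series of the defining representation is indeed $(t+t^5)/\bigl((1-t^2)(1-t^6)\bigr)$ (the exponents $1,5$ of $I_2(6)$), so containment plus equal Hilbert series closes the argument. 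Two small repairs are needed. First, do not cite Lemma \ref{logical-operator-weight-n-lemma} for the odd-weight property of the nontrivial cosets: in the paper that lemma is deduced \emph{from} Theorem \ref{invariant-theorem}, so as written your argument is circular; replace it with the elementary observation that an even-weight vector $v\in\mathcal{C}^\perp\setminus\mathcal{C}$ would make $\mathcal{C}+GF(4)v$ self-orthogonal of dimension $(n+1)/2>n/2$, contradicting maximality. Second, directness of the sum $x\,\mathbb{C}[\hat f,\hat g]\oplus xy^2(x^2-y^2)\,\mathbb{C}[\hat f,\hat g]$ does not follow from $y^2(x^2-y^2)\notin\mathbb{C}[\hat f,\hat g]$ alone (a relation only places it in the fraction field); argue instead that $y^2(x^2-y^2)$ is not $T_0$-invariant, whereas any element of $\mathbb{C}(\hat f,\hat g)$ is, which kills a nontrivial relation in one line.
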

Note that, via the MacWilliams identity, this theorem also determines the weight enumerator of $N(\mathcal S)$,  $B(x,y)$ 
to be, 
\begin{equation}
    B(x,y) = \frac{1}{2^{n-1}} A(x+3y,x-y) = (x+3y)  S_1 +2^{-1} y (x-y) (x+3 y)(x^2-y^2) S_2.
\label{Beq}
\end{equation}
The weight-enumerator for logical operators $C(x,y)=B(x,y)-A(x,y)$, is therefore,
\begin{dmath}
    C(x,y)=B(x,y)-A(x,y) = 3y S_1 +2^{-1} y (x^2 - y^2)(x^2 - 3 y^2) S_2.
\label{Ceq}
\end{dmath}

For maximal self-orthogonal codes, invariant theory provides weaker constraints than for self-dual codes -- the space of consistent weight enumerators has dimensionality $\floor{(n-1)/6}+\floor{(n-5)/6}+2 \sim n/3$. However, there are more classical non-negativity constraints, as the coefficients $A_w$ and $C_w$ of both $A(x,y)$ and $C(x,y)$ must be non-negative. 

Before we continue, let us pause to remark that, using elementary considerations, we can fix some of the unknown coefficients $c_j'$ and $d_j'$ in $S_1$ and $S_2$. Note that, for any code, $A_0=1$ ({ because the identity operator is always in the stabilizer of the code}), which implies $c_0'=1$. We demand $C_1=0$, because any $[[n,1]]$ $M_3$-code with a weight $1$ logical operator must be the tensor product of the identity operator with a $[[n-1,0]]$ stabilizer state, and is therefore trivial. This implies $d_0'=-6$. We also demand that $A_2 \neq 0$ because any $M_3$ code with a weight-2 stabilizer must be equivalent to a code that contains the stabilizers $-X\otimes X \otimes I^{\otimes n-2}$, { $-Y\otimes Y \otimes I^{\otimes n-2}$}, and $-Z\otimes Z \otimes I^{\otimes n-2}$; and is therefore a tensor product of the two-qubit $M_3$-state with another $[[n-2,1]]$ $M_3$-code, so is trivial. Demanding $A_2=0$ gives $c_1'=\frac{3}{2}(5-n)$.

\subsection{Quantum constraints}
\label{sec:quantum-consistency}

While any point in the classical feasible region represents a mathematically plausible list of codeword weights, from the perspective of classical combinatorics, not all such lists correspond to physically realizable quantum stabilizer codes.
Because any code can be used to define a magic state distillation protocol, it must satisfy additional \textbf{quantum consistency conditions}.

\begin{enumerate}
    \item \textbf{Non-negative success probability:} The probability of successfully projecting $n$ copies of a twirled  noisy magic state $\rho_T(\epsilon)^{\otimes n}$ onto the code space of any $M_3$-code must be non-negative for all physical error rates $\epsilon$. This success probability is given by,
    \begin{equation}
        \eta(\epsilon) = W_I(\bar{r}(\epsilon)) = A(1, i\bar{r}(\epsilon)) \geq 0, \quad \forall \epsilon \in [0, 1].
    \end{equation}
    This is a powerful constraint because it applies when the formal variable $y=i\bar{r}$ in the weight enumerator is imaginary. A polynomial $A(1, y)$ with positive coefficients can become negative when evaluated at imaginary $y$.

    \item \textbf{Threshold bound:} As discussed in Section \ref{sec:basics}, for distillation via an $[[n,1]]$ $M_3$-code, if the input noisy magic states lie within the stabilizer octahedron ($\epsilon_{\rm in} > \epsilon_{\rm max} \approx 0.21$), then $\epsilon_{\rm out}>\epsilon_{\rm max}$. \footnote{This constraint follows directly from the Gottesman-Knill theorem -- mixtures of stabilizer states must be mapped to mixtures of stabilizer states under Clifford unitaries and stabilizer measurements. Its validity does not rely on the natural, but unproven, assumption that non-stabilizer states cannot be simulated efficiently. } A valid quantum code must exhibit thresholds $\epsilon_* \leq \epsilon_{\rm max}$.
\end{enumerate}

We find that these quantum constraints are independent of classical constraints and drastically reduce the region of feasible codes. We will illustrate this by explicitly working out all classical and quantum constraints for two instructive examples:
\begin{enumerate}
    \item the space of $[[12,0]]$ $M_3$-states (which correspond to self-dual $[12,6]_{\rm GF(4)}$ codes); and, 
    \item the space of $[[11,1]]$ $M_3$-codes (which correspond to maximal self-orthogonal $[11,5]_{\rm GF(4)}$ codes.)
\end{enumerate}
In Appendix \ref{app:small-examples}, we perform similar computations to constrain $[[n,0]]$ and $[[n,1]]$ $M_3$-codes for other small values of $n$.

\subsection{Constraining $[[12,0]]$ $M_3$-states} 
\label{sec:12-0}
For the case of $n=12$, we find from Equation \eqref{eq:invariant-theory-self-dual}, that invariant theory implies that a general weight enumerator depends on three coefficients, $c_0$, $c_1$ and $c_2$. Using $A_0=1$ to fix $c_0$, we find, 
\begin{equation}
\begin{aligned}
    A(x, y) = & \, x^{12} + x^{10}y^2 (18 + c_1) + x^{8}y^4 (135 + 7c_1 + c_2) + x^{6}y^6 (540 + 10c_1 - 4c_2) \\
              & + x^4y^8 (1215 - 18c_1 + 6c_2) + x^2y^{10} (1458 - 27c_1 - 4c_2) + y^{12} (729 + 27c_1 + c_2).
\end{aligned}
\end{equation}
From this expression, we read off the classical constraints, $A_{w} \geq 0$, for $w=2, \ldots, 12$:
\begin{eqnarray}
    18 + c_1 &\geq & 0, \\
    135 + 7c_1 + c_2 &\geq & 0, \\
    540 + 10c_1 - 4c_2 &\geq & 0, \\
    1215 - 18c_1 + 6c_2 &\geq & 0, \\
    1458 - 27c_1 - 4c_2 &\geq & 0, \\
    729 + 27c_1 + c_2 &\geq & 0. 
\end{eqnarray}
The convex polytope allowed by these classical constraints in $c_1-c_2$ plane is shown in blue in Figure \ref{fig:12-0-codes}.

\begin{figure}
    \centering
    \includegraphics[width=\linewidth]{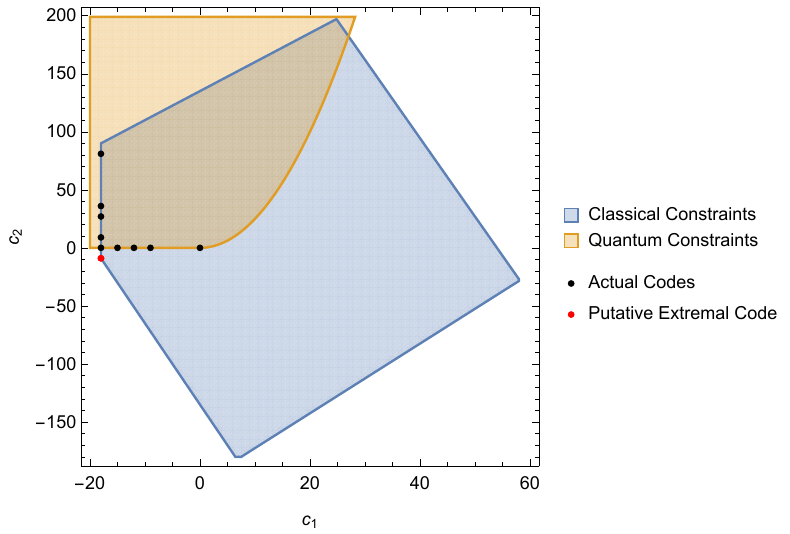}
    \caption{Linear programming bounds for classical self-dual linear  $[12,6]_{GF(4)}$ codes. The set of allowed weight enumerators is parameterized by two real numbers, $c_1$ and $c_2$. The region of the $c_1$-$c_2$ plane allowed by classical  linear programming constraints is shown in blue. The region allowed by the quantum  constraints $A(1,i\frac{1-2\epsilon}{\sqrt{3}})\geq 0$ for all $0 \leq \epsilon\leq 1$ is shown in orange.  Black points denote all self-dual $[12,6]_{GF(4)}$ linear codes. The red point denotes the putative extremal $[12,6,6]_{GF(4)}$ weight enumerator which does not correspond to an actual self-dual linear code. This code, though allowed by classical constraints, is ruled out by quantum constraints.}
    \label{fig:12-0-codes}
\end{figure}

We now consider the quantum constraint that the success probability be non-negative, which implies 
\begin{equation}
    A(1,\frac{i}{\sqrt{3}}(1-2\epsilon))\geq 0, \label{success-prob}
\end{equation} 
for all $\epsilon \in [0,1]$. This infinite collection of linear constraints can be translated into relatively simple conditions on $c_j$, as follows. $f(1,\frac{i}{\sqrt{3}}(1-2\epsilon)) = 4\epsilon(\epsilon-1) \to 0$ as $\epsilon \to 0$. Therefore, demanding the success probability be non-negative for pure magic states, we have a constraint only on $c_{\floor{n/6}}$:  
\begin{equation}
(-1)^{\floor{n/6}} c_{\floor{n/6}} \geq 0.
\end{equation} 
For non-zero $\epsilon$, define $\phi(\epsilon)=-g/f^3$; and observe that $\phi \in [0, \infty]$ for $\epsilon \in [0,1]$. The constraint \eqref{success-prob} then becomes 
\begin{equation}
    1+\sum_{j=1}^{\floor{n/6}}\phi^j (-1)^jc_j \geq 0, \quad \forall \phi \in [0,\infty).
\end{equation}
For $n=12$, we have $1-c_1 \phi + c_2 \phi^2 \geq 0$, which reduces to
\begin{equation}
    c_2 \geq \begin{cases}
         0 & c_1 <0 \\
         c_1^2/4 & c_1 \geq 0.
    \end{cases}
\end{equation}

We plot the quantum constraint in orange in Figure \ref{fig:12-0-codes}. A computer search \cite{macwilliams1978self} classified all actual $[12,6]_{GF(4)}$ codes.  These codes are available on \cite{munemasa_codes_website}, and have 9 distinct weight enumerators, which are also shown in Figure \ref{fig:12-0-codes}. Figure \ref{fig:12-0-codes} shows that the quantum constraints are independent of classical constraints, and substantially reduce the region of feasible weight-enumerators. The number of consistent integral weight enumerators that satisfy the classical constraints is 2919; of these, only 570 also satisfy the quantum constraints.

Interestingly, this result resolves a long-standing (albeit slightly obscure) mystery in classical coding theory.  The weight enumerator with maximum possible distance consistent with Theorem \ref{thm:self-dual-invariant-theory} is unique, and can be computed to be \cite{macwilliams1978self},
\begin{equation}
   A(1,y) = 1+396 y^6+1485 y^8+1980 y^{10}+234 y^{12}. \label{extremal-12}
\end{equation}
This weight enumerator for a putative $[12,6,6]_{GF(4)}$ code has positive integral coefficients divisible by 3. Moreover, as shown in Figure \ref{fig:12-0-codes}, it lies at an extremal vertex of the classical linear programming constraints $A_i\geq 0$. Several authors therefore conjectured that a self-dual code $[12,6,6]_{GF(4)}$ must exist \cite{semakov1971uniformly,paper2} -- although no explicit construction of such a code was known. Computer searches reported by MacWilliams, Odlyzko, Sloane and Ward \cite{macwilliams1978self} later revealed that, in fact, no $[12,6,6]_{GF(4)}$ code exists. For many decades, the non-existence of this code was simply an inexplicable combinatorial fact, demonstrable only via a tedious brute-force search. We now see from Figure \ref{fig:12-0-codes}, that this code is ruled out by the quantum constraints. In particular, the projection probability for pure magic states arising from Equation \eqref{extremal-12} is negative:
\begin{equation}
A(1, \frac{i}{\sqrt{3}}) = 1-\frac{396}{3^3}+\frac{1485}{3^4}-\frac{1980}{3^5}+\frac{234}{3^6} = -(256/81) <0,
\end{equation}
providing a one-line proof of the nonexistence of the extremal $[12,6,6]_{GF(4)}$ code. In section \ref{sec:self-dual}, we generalize this result to rule out all extremal self-dual linear $GF(4)$ codes with lengths divisible by $12$.

\subsection{Constraining $[[11,1]]$ $M_3$-codes} 
\label{sec:11-1}
From Equation \eqref{Aeq}, weight enumerators for $[[11,1]]$ $M_3$-codes depend on four coefficients $c_0'$,$c_1'$, $d_0'$ and $d_1'$. We reduce this to two coefficients, by demanding that $A_0=1$ and $C_{1}=0$. (Any $M_3$-code with $C_1 \neq 0$ is necessarily trivial.) To impose the classical linear programming constraints, it suffices to focus on $B(x,y)$, as it contains contributions from stabilizers (which are of even weight) and logical operators (which are of odd weight). We find from equation \eqref{Beq},
 \begin{align*}
   B(x,y) = &x^{11} + x^9y^2 \left( c'_1 + 9 \right) 
    + \frac{1}{2} x^8y^3 \left( 6c'_1 + d'_1 + 60 \right) 
    + x^7y^4 \left( 4c'_1 + d'_1 + 42 \right) \\
    &+ 3x^6y^5 \left( 4c'_1 - d'_1 + 96 \right) 
    + x^5y^6 \left( -2c'_1 - 3d'_1 + 162 \right) 
    + x^4y^7 \left( -6c'_1 + 6d'_1 + 972 \right) \\
    &+ 3x^3y^8 \left( -4c'_1 + d'_1 + 135 \right) 
    + x^{2}y^9 \left( -36c'_1 - 5d'_1 + 1296 \right) 
    + y^{10} \left( 9c'_1 - d'_1 + 405 \right) \\
    &+ \frac{3}{2} y^{11} \left( 18c'_1 + d'_1 + 324 \right).
\end{align*}
As before, we plot the classical constraints arising from demanding $B_w \geq 0$ in blue in Figure \ref{fig:11-linear-programming}, along with all $[[11,1]]$ $M_3$-codes (obtained from shortening the self-dual codes in Munemasa's database \cite{munemasa_codes_website}).

Let us now turn to the quantum constraints. We impose the constraint of non-negative success probability as before; the region allowed by this constraint is shown in green in Figure \ref{fig:11-linear-programming}. To impose the threshold constraint, we compute $f_{\rm MSD}(\epsilon)$ using Equation \eqref{error-output}, and then demand that $f_{MSD}(\epsilon_{\rm in}) \geq  \epsilon_{\rm max}$ for all $\epsilon_{\rm in} \in [\epsilon_{\rm max},1-\epsilon_{\rm max}]$ for both choices of logical operators $M_3^{\otimes n}=\overline{M}_3$ or $M_3^{\otimes n}=\overline{M}_3^\dagger$. We carry this out numerically, for a large number of randomly chosen values of $\epsilon_{\rm in} \in [\epsilon_{\rm max},1-\epsilon_{\rm max}]$. The region satisfying the distillation threshold constraint is shown in orange in Figure \ref{fig:11-linear-programming}.

From Figure \ref{fig:11-linear-programming} we see that the two quantum constraints are independent of each other and the classical constraints. They serve to tightly constrain the space of allowed weight enumerators to a remarkably small region.
  
\begin{figure}
    \centering
    \includegraphics[width=0.95\linewidth]{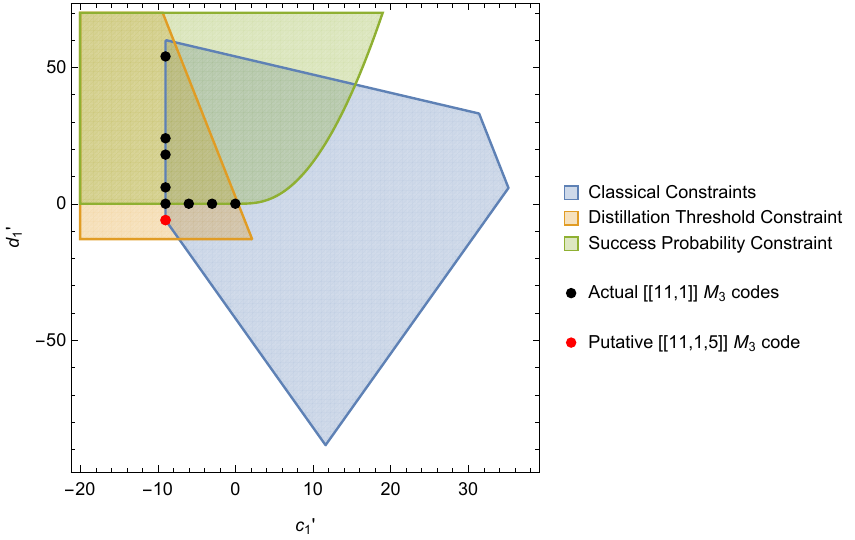}
    \caption{Linear programming constraints for $n=11$ magic state distillation routines, restricted to the plane defined by $C_1=0$. Regions are labeled as in Figure \ref{fig:7-1-linear-programming}. Black points denote all $[[11,1]]$ $M_3$-codes. The red point denotes the only putative $[[11,1,5]]$ $M_3$-code allowed by classical constraints -- we see it is ruled out by the new quantum constraints.}
    \label{fig:11-linear-programming}
\end{figure}

As in the case of $12$-qubit $M_3$-codes, the quantum constraints also serve to rule out a putative high-distance code whose weight enumerator is consistent with all classical constraints. We can define a unique weight enumerator corresponding to an $[[11,1,5]]$ $M_3$-code, which lies at the corner of the blue region in Figure \ref{fig:11-linear-programming}, corresponding to $(c_1',d_1')=(-9, -6)$. This has a weight enumerator with all positive coefficients, 
\begin{equation}
    B(1,y) = 1+198 y^5+198 y^6+990 y^7+495 y^8+1650 y^9+330
   y^{10}+234 y^{11},
\end{equation}
so one might have expected that it corresponds to a valid $[[11,1,5]]$ $M_3$-code. However, it is ruled out by the quantum constraints, which imply that the maximum distance for an $[[11,1]]$ $M_3$-code is $4$. We generalize this observation to place stronger upper bounds on the distance of $M_3$-codes of arbitrary length in Section \ref{sec:odd-n-bounds}.

While our new quantum constraints are quite powerful, they still leave a slight mystery.  The region allowed by classical and quantum constraints is \textit{almost} as small as it can be, i.e., it is almost equal to the convex hull of the set of all weight enumerators of actual $[[11,1]]$ $M_3$-codes. The difficulty is that the top left corner of the blue region in Figure \ref{fig:11-linear-programming}, $(c_1',d_1')=(-9, 60)$, lies within the quantum constraints and corresponds to a consistent weight enumerator for an $M_3$-code, 
\begin{equation}
    B(1,y)=1+33 y^3+66 y^4+1386 y^7+693 y^8+1320 y^9+264
   y^{10}+333 y^{11}.
\end{equation} However, from the classification in Munemasa's database \cite{munemasa_codes_website}, no $[[11,1]]$ $M_3$-code exists that has this weight enumerator. Computing the threshold of this weight enumerator using Equation \eqref{error-output}, we find it would have had a threshold of $0.190827$, slightly exceeding that of the 5-qubit code. It would be interesting to understand ``why'' this code does not exist.

\section{Quantum bounds for classical self-dual codes}
\label{sec:self-dual}

As discussed in the previous section, Theorem \ref{thm:self-dual-invariant-theory} states that the weight enumerator of any linear self-dual  $GF(4)$ code must take the form \cite{macwilliams1978self, nebe2006self},
\begin{equation}
    A(x,y) = \sum_{j=0}^{\floor{\frac{n}{6}}} c_j {f}(x,y)^{\frac{n}{2}-3j} g(x,y)^j, \label{invariant-theory-self-dual}
\end{equation}
and is known as a Gleason's theorem \cite{gleason1971weight, macwilliams1972generalizations}. Classical linear self-dual $GF(4)$ codes are one of the four celebrated types of classical self-dual codes for which a Gleason's theorem applies \cite{rains2002self}. This theorem immediately places an upper bound on the distance of a self-dual code, commonly known as the Mallows-Sloane bound \cite{MALLOWS1973188}. The generalized Mallows-Sloane bound on the distance $d$ of a Hermitian self-dual code over $GF(4)$ of length $n$ was established by MacWilliams et al.~\cite{macwilliams1978self} as we now review.

After demanding $A_0=1$, the weight enumerator for a self-dual code depends on $\floor{n/6}$ parameters: $c_1$, $c_2$, $\ldots c_{m}$. If we demand $A_{2j}=0$ for $j=1,\ldots r$, we have exactly $\floor{n/6}$ linear equations in $\floor{n/6}$ unknowns, which determine the $\{c_i\}$ uniquely. The explicit solution is obtained by MacWilliams et. al.  \cite{macwilliams1978self} to be,
\begin{equation}
    c_j = \frac{n}{2j} \sum_{r=0}^{j-1} (-3)^{r+1} \binom{n/2-3j+r}{r} \binom{3j-r-2}{j-r-1}. \label{extremal}
\end{equation}
The weight enumerator $A(x,y)$ defined by Equation \eqref{extremal} is said to be an \textbf{extremal} weight enumerator, and is unique for any $n$. One can check that $A_{2\floor{n/6}+2} \neq 0$ for these extremal enumerators, so the distance of any self-dual code is bounded from above by 
\begin{equation}
    d \leq 2 \left\lfloor \frac{n}{6} \right\rfloor + 2.
    \label{eq:mallows-sloane}
\end{equation}
For lengths $n=12m$, this bound permits a distance of $d = 4m+2$.

Using equation \eqref{extremal}, consistent extremal weight enumerators with positive integral $A_w$ divisible by 3 exist for all even $n \leq 96$. A central question in coding theory is determining whether or not self-dual codes over $GF(4)$ realizing these extremal weight enumerators exist. Such codes, if they exist are known as extremal self-dual codes (of type $4^H$). 
MacWilliams \textit{et al.} classified type $4^H$ extremal codes of small length; extremal codes were found for all even $n \leq 18$ except $n=12$. In subsequent decades, researchers, notably Huffman, Lam and Pless (\cite{lam1990there,huffman1990extremal, huffman1991extremal}) attempted to classify larger lengths, and established the nonexistence of an extremal $[24,12,10]_{ GF(4)}$ code. The existence of extremal self-dual codes of lengths $12m$ for $m>2$ is an open question. 

We saw in the previous section that the new quantum constraint of non-negative success probability, provides a one-line proof of the non-existence of the $[12,6,6]$ extremal code. We extend this observation into the following theorem, which applies to all codes with length divisible by $12$.  
\begin{theorem}
No extremal type $4^H$ self-dual code of length $12m$ 
exists. \label{thm: no-self-dual}
\end{theorem}
\begin{proof}
To obtain the extremal weight enumerator  of length $n=12m$, one determines the $2m-1$ unknowns in Equation \eqref{invariant-theory-self-dual}, $\{c_{2},$ $c_{3},$ $\ldots, c_{2m}\},$ from the $2m-1$ equations, $A_2=A_4=\ldots = A_{4m}=0$. The resulting weight enumerator  corresponds to a putative code with parameters $[12m,6m, 4m+2]_{GF(4)}$, with $c_j$ given by Equation  \eqref{extremal}. 
Specializing to $n=12m$, one can easily compute that, 
$$c_{2m} = -\sum_{t=0}^{m-1}3^{2+2t} \frac{(4 t+1) (6 m-2 t-3)!}{(4 m-1)! (2 m-2 t-1)!},$$
which is manifestly negative.
For a code of length $12m$, using Equation \eqref{invariant-theory-self-dual}, we find $A(1, \frac{i}{\sqrt{3}})=(-16/27)^{2m}c_{2m}$. Because $c_{2m}$ is strictly negative and $(-16/27)^{2m}$ is strictly positive,  $A(1, \frac{i}{\sqrt{3}})$ must be negative. However, as established in Section \ref{sec:quantum-consistency}, $A(1, \frac{i}{\sqrt{3}})$ represents the probability of successfully projecting $n$ pure magic states onto the stabilizer state corresponding to the putative code; and therefore must be non-negative. Thus, by contradition, no extremal self-dual code with length divisible by 12 can exist.\end{proof} 

This theorem not only explains the non-existence of  extremal type $4^H$ self-dual codes of length $12$ and $24$, it also rules out extremal codes of lengths $36$, $48$, $60$, $72$, $84$ and $96$ -- the existence of which was, to the best of our knowledge, previously an open question (see \cite{macwilliams1978self}). For other values of $n$ we find $A(1, \frac{i}{\sqrt{3}})$ is non-negative, and so we cannot rule out their existence. 

\subsection{Bounds on distance for $[[n,1]]$ $M_3$-codes}
\label{sec:odd-n-bounds}
As a corollary of Theorem \ref{thm: no-self-dual}, our quantum constraints place a stronger upper bound on the distance of $[[n,1,d]]$ $M_3$-codes for $n \equiv 11 \pmod {12}$ than those previously known in the literature \cite{rains}. In particular, Rains \cite{rains} obtained upper bounds for the distance $d$ of a general $[[n,1,d]]$ stabilizer code using linear programming. These bounds continue to hold without modification for $M_3$-codes, and are summarized in the following theorem.
\begin{theorem}[Rains, \cite{rains}]
    Demanding only non-negativity of $A_i$, $B_i$ and $C_i$ implies that the distance of an $[[n,1,d]]$ $M_{3}$-code satisfies $d<d_{\max}$, with  \begin{equation}
    d_{\text{max}} \leq \begin{cases}
        2m+1  & n = 6m+1, ~
        n=6m+3 \\
       2m  +3  & n = 6m+5.
    \end{cases}
\end{equation}
\label{thm:old-theorem}
\end{theorem}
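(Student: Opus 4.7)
The plan is to translate the distance requirement $C_i=0$ for $i<d$ into a linear system on the invariant-theoretic parameters $c_j'$ and $d_j'$ from Theorem \ref{invariant-theorem}, and then to show that for $d$ strictly larger than the claimed $d_{\max}$ no non-negative solution exists. First I would expand $A(x,y)$, $B(x,y)$, $C(x,y)$ from Equations \eqref{Aeq}--\eqref{Ceq} in powers of $y$ at $x=1$ after fixing $c_0'=1$ by $A_0=1$. This leaves $2m+1$ free parameters when $n=6m+5$ and $2m$ free parameters when $n=6m+1$ or $n=6m+3$. Each coefficient $A_{2i}$, $B_j$, $C_{2i+1}$ becomes a linear functional of the $c_j'$ and $d_j'$, with entries read off from the multinomial expansion of $\hat f(1,y)^{a}\hat g(1,y)^{b}$.

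Second, by Lemma \ref{logical-operator-weight-n-lemma} the polynomial $C(x,y)$ contains only odd powers of $y$, so imposing distance at least $d$ amounts to $C_1=C_3=\cdots=0$ up to the largest odd index less than $d$. For $d=d_{\max}$ the count of equations leaves enough surplus to satisfy positivity, as witnessed by the 5-qubit code at $n=5$ and by the known extremal linear $GF(4)$ codes at larger $n$. For $d$ equal to the next admissible odd value the system either becomes over-determined or forces a unique ``extremal'' weight enumerator up to an overall scale.

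Third, I would show that this extremal weight enumerator necessarily has some coefficient $A_{2i}$ or $B_j$ that is strictly negative, ruling it out. This is the analogue of the Mallows--Sloane extremal-enumerator argument of \cite{MALLOWS1973188}. The shorter route that I would actually write down is to observe that any $M_3$-code is in particular an additive self-orthogonal code over $GF(4)$, so the linear programming distance bounds of \cite{rains} apply verbatim; one then checks that those bounds evaluate to $2m+1$ for $n=6m+1,\,6m+3$ and to $2m+3$ for $n=6m+5$, giving exactly the claimed values and showing that the additional $M_3$ invariant structure does not tighten the purely classical bound.

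The main obstacle is establishing, uniformly in $m$, the sign of the critical coefficient in the extremal invariant weight enumerator. A direct inductive computation in the $(c_j',d_j')$ coordinates becomes unwieldy, so I would rely on the LP-duality construction of explicit dual polynomials $\beta(x,y)$ certifying infeasibility, as developed in \cite{rains, rains2003new}. This reliance is what makes the bound ``classical'' in character: it uses nothing beyond the MacWilliams identity and non-negativity of $A_i$, $B_i$, $C_i$, and it is exactly this weakness that the additional quantum constraints in the next section will exploit to tighten the bound further.
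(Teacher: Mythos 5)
Your ``shorter route'' --- observing that an $M_3$-code is in particular an additive self-orthogonal $GF(4)$ code, so the linear-programming distance bounds of \cite{rains} apply verbatim and evaluate to $2m+1$ for $n=6m+1,\,6m+3$ and $2m+3$ for $n=6m+5$ --- is exactly the paper's argument, which likewise offers no independent derivation beyond invoking \cite{rains}. The only small point you omit is why ``only $A_i$, $B_i$, $C_i$'' suffices even though Rains also uses the shadow enumerator: for $M_3$-codes the shadow coincides with $B(x,y)$ (as the paper notes in a footnote), so the shadow constraint adds nothing; your longer invariant-theoretic/extremal-enumerator sketch is a reasonable alternative but is not what the paper does and is not needed.
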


Using Theorem \ref{thm: no-self-dual}, we can strengthen Theorem \ref{thm:old-theorem} to the following result.
\begin{theorem}
\label{thm:distance-bound}
The quantum distance of a non-degenerate $[[n,1,d]]$ $M_3$-code satisfies:
\begin{equation}
    d < 2 \lfloor \frac{n+1}{6} \rfloor + 1.
\end{equation} \label{thm:distance-M_3-code}
\end{theorem}
\begin{proof}
To translate Theorem \ref{thm: no-self-dual} to a bound on the quantum distance of an $[[n,1]]$ $M_3$-code, we use the fact that any $[[n,1]]$ stabilizer code $S$  can be uniquely extended to $[[n+1,0]]$ stabilizer state $\mathcal S_{\rm ext}$. Explicitly, the stabilizers of $\mathcal S_{\rm ext}$ are constructed from the stabilizers and logical operators of $\mathcal S$ as follows: 
\begin{itemize}
    \item For each stabilizer $P$ of $\mathcal S$, construct a stabilizer $P_{\rm ext}$ of $\mathcal S_{\rm ext}$ of the form $P \otimes I$. Note that $\wt~ P_{\rm ext}=\wt ~ P$.
    \item For each representative of the logical operator $L$, $P_L \in \mathcal N(S)$, construct a stabilizer $P_{\rm ext}'$ of $\mathcal S_{\rm ext}$ of the form $P_L \otimes L$. Note that $\wt~P_{\rm ext}'=\wt~ P_L+1$.
\end{itemize}
Since logical operators in an $[[n,1]]$ $M_3$-code have odd weight, $d+1$ is even. Thus, if the quantum distance of a \textit{non-degenerate} $[[n,1]]$ $M_3$-code is $d$, the distance of the extended $[[n+1,0,d_{\rm ext}]]$ state is at least $d+1$. Note that $d_{\rm ext}$ is the distance of the corresponding classical self-dual linear $[n+1,\frac{n+1}{2}]_{GF(4)}$ code,

The existence of a non-degenerate $[[12m-1,1]]$ $M_3$-code with distance $d= 4m+1$ would imply a self-dual code of length $12m$ with distance $4m+2$.
Theorem \ref{thm: no-self-dual} rules out such self-dual codes.
Therefore, no non-degenerate $[[12m-1, 1,4m+1]]$ $M_3$-code with quantum distance $4m+1$ can exist.
\end{proof}
Our proof of this bound applies only to non-degenerate $M_3$-codes. A degenerate $[[n,1,d]]$ code of distance $d$ contains a stabilizer of weight $w<d$; the distance of the extended code would be also be $w<d+1$, and therefore need not violate Theorem \ref{thm: no-self-dual}.  However, when we attempted to solve the linear programming problem in the same manner as we did for $n=11$ in section \ref{sec:11-1}, using a computer algebra system for finite values of $n$ up to $215$, allowing for the possibility of degeneracy, we found that Theorem \ref{thm:distance-M_3-code} remains true. We therefore conjecture that Theorem \ref{thm:distance-bound} applies to all degenerate and non-degenerate $M_3$-codes.

We compare Theorem \ref{thm:distance-bound} to the earlier bound of Rains \cite{rains} in Figure \ref{fig:distance-bounds}.  We emphasize that theorem \ref{thm:distance-M_3-code} is a consequence of the new quantum constraints -- we checked that  $[[12m-1,1,4m+2]]$ $M_3$-codes are not ruled out by purely classical constraints. We find it is possible to obtain even stronger distance bounds for larger values of $n$, by solving the linear programming problem in a computer algebra system. These are presented in Appendix \ref{app:distance}.

We remark that we can  apply a similar argument to bound the \textit{classical distance} $d_C$ of maximal self-orthogonal codes over $GF(4)$ of length $12m-1$, i.e., the minimum weight of a stabilizers of the $[[n,1]]$ $M_3$-code. 

\begin{theorem}
\label{thm:classical-bound}
Let $\mathcal{C}$ be a maximal self-orthogonal linear code over $GF(4)$ of length $n = 12m-1<215$. The minimum Hamming distance $d$ of $\mathcal{C}$ satisfies
\begin{equation}
    d \leq 4m,
\end{equation}
which is strictly less than the classical bound of $4m+2$.
\end{theorem}
Again this was proven using linear programming, as done explicitly in section \ref{sec:11-1} for $n=11$, using a computer algebra system for $n\leq 215$. We conjecture that it holds for arbitrary $n$.

\begin{figure}
    \centering
    \includegraphics[width=0.95\linewidth]{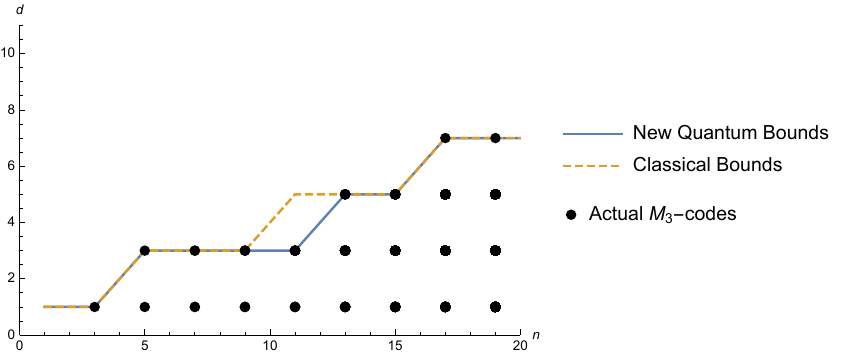}
\caption{We plot our quantum bounds on the distance of an $[[n,1,d]]$ $M_3$-code from Theorem \ref{thm:distance-M_3-code} (in solid blue) and the classical bounds from Rains \cite{rains} (in dashed orange) for codes of size less than $20$. We also plot distances of all  $[[n,1,d]]$ $M_3$-codes, which have been classified for $n \leq 19$ as black points, showing that the quantum bounds are saturated.
    }
    \label{fig:distance-bounds}
\end{figure}

\section{Constraining distillation routines}\label{sec:distillation-routines}

In this section, we use invariant theory and linear programming to constrain the performance of a magic state distillation routine based on an $M_3$-code. As a consequence of section \ref{sec:quantum-consistency}, we cannot use linear programming to directly constrain the threshold. We can, however, constrain the noise suppression exponent $\nu$ of a distillation routine, which characterizes the noise suppression of a distillation routine for small $\epsilon$ via,
\begin{equation}
      f_{\rm MSD}(\epsilon) = \Theta(\epsilon^\nu),  
\end{equation} 
and plays a role for distillation analogous to the role played by the distance $d$ for conventional quantum error-correction. Here we show that it is possible to constrain the noise-suppression exponent using invariant theory and linear programming, using techniques similar to those used to constrain the code distances \cite{MALLOWS197568,rains,MALLOWS1973188,conway1990new}. 

\subsection{Results from invariant theory}

We now translate the invariant expressions for weight enumerators of an $[[n,1]]$ $M_3$-code to expressions for the distillation performance of the corresponding distillation routine. Recall that the weight enumerator for stabilizers $\mathcal S$, is given by $A(x,y)=xS_1+xy^2(x^2-y^2)S_2$, and the weight enumerator for logical operators $N(\mathcal S)/S$ is given by $C(x,y)= 3y S_1 +2^{-1} y (x^2 - y^2)(x^2 - 3 y^2) S_2$, where,\begin{eqnarray}
S_1 & = & \sum_{j=0}^{\floor{(n-1)/6}}c_j'  f^{\frac{n-1}{2}-3 j}  g^{j},\\ S_2  &=&  \sum_{j=0}^{\floor{(n-5)/6}} d_j'  f^{\frac{n-5}{2} -3j}  g^{j}.
\end{eqnarray}
and $f$ and $g$, expressed in terms of $\epsilon$, are
\begin{equation}
f(1,i\bar{r}(\epsilon)) = 4 (1-\epsilon) \epsilon, \quad g(1,i\bar{r}(\epsilon)) = -\frac{16}{27}  (1-2 \epsilon )^2 \left(\epsilon ^2-\epsilon +1\right)^2.
\end{equation}

We have,
\begin{equation}
f_{\rm MSD} = \frac{M^{(\pm)}(\epsilon)}{2N(\epsilon)},
\label{eq3.8}
\end{equation}
where
$N(\epsilon)$ in the denominator of Equation \eqref{eq3.8}, becomes
\begin{dmath}
    N(\epsilon)=S_1 -\frac{4}{9}(1-2 \epsilon )^2 ((\epsilon -1) \epsilon +1)S_2.
\end{dmath}
The form of the numerator $M^{(\pm)}$ depends on whether we have chosen to define our logical operators so that $M_3^{\otimes n}=\overline{M}_3$ or $M_3^{\otimes n}=\overline{M}_3^\dagger$. For $M_3^{\otimes n}=\overline{M}_3^\dagger$ the numerator of Equation \eqref{eq3.8} becomes, 
\begin{dmath}
M^{(-)}(\epsilon)= (2-2\epsilon)S_1 - \frac{8}{9} \epsilon ^2 (2 \epsilon-1) (\epsilon ^2-\epsilon +1) S_2.
\end{dmath}
For $M_3^{\otimes n}=\overline{M}_3$, the numerator of Equation \eqref{eq3.8} becomes 
\begin{dmath}
M^{(+)}(\epsilon)=2\epsilon S_1 + \frac{8}{9} (\epsilon -1)^2 (2 \epsilon -1) \left(\epsilon ^2-\epsilon +1\right) S_2.
\end{dmath}

As mentioned in section \ref{sec:M3-codes}, if we demand that the success probability for distillation be non-zero for pure input magic states, we should restrict attention to codes of length $n \equiv \pm 1 \pmod 6$, and choose logical operators so that
\begin{equation}
    M_3^{\otimes n}= \begin{cases}
        \overline{M}_3 & n \equiv 1 \pmod 6 \\
        \overline{M}_3^\dagger & n \equiv 5 \pmod 6.
    \end{cases} \label{M3-choice}
\end{equation} We will assume that this choice has been made in what follows, and refer to these as class-$5$ and class-$1$ codes respectively.

Our first result is the following theorem. 
\begin{theorem}
The noise suppression exponent of any magic state distillation routine based on an $[[n,1]]$ $M_3$-code, with logical operators chosen as per Equation \eqref{M3-choice}, and non-zero success probability for pure input magic states, must satisfy
\begin{equation}
\nu \equiv \begin{cases} 1 \pmod 3  &  n \equiv 1 \pmod 6 \\
2 \pmod 3   &  n \equiv 5 \pmod 6. \\
\end{cases}
\end{equation}
In particular, any such code with $n \equiv 5 \mod 6$ will distill $\ket{T}$-states with quadratic noise suppression. 
\label{theorem-mod-3}
\end{theorem}
\begin{proof}

It is convenient to define, \begin{equation}
    \phi(\epsilon)= -\frac{{ f}^3}{108  {g}}= \frac{(1-\epsilon )^3 \epsilon ^3}{(1-2 \epsilon )^2 \left(\epsilon ^2-\epsilon +1\right)^2}. \label{phi}
\end{equation}
We can rewrite $S_1$ and $S_2$ as sums over $\phi$
 as follows,
 \begin{equation}
    \tilde{S}_1=  \sum_{j=0}^{\floor{(n-1)/6}} c_{m-j} \phi^{j},  \quad \tilde{S}_2 = \sum_{j=0}^{\floor{(n-5)/6}}  d_{m-j} \phi^j, \label{sums}
\end{equation} with\footnote{We trust that the reader will not confuse $c_j$ defined here, for $[[n,1]]$ $M_3$-codes with the coefficients $c_j$ used in the section on $[[n,0]]$ $M_3$-states.}
\begin{eqnarray}
    c_j & = &  \left(-\frac{16}{27}\right)^j c'_j 4^{\frac{n-1}{2}-3 j}, \\
    d_j & = & \left(-\frac{16}{27}\right)^j d'_{j} 4^{\frac{n-5}{2}-3 j}.
\end{eqnarray}

Let us first consider codes with $n =6m+5$.  In terms of the new variables, $M^{(-)}$ takes the form,
\begin{equation}
    M^{(-)}= g(\epsilon)^m (2-2 \epsilon ) (1-\epsilon )^2 \epsilon ^2 \left(\tilde{S}_1+\frac{4}{9}H(\epsilon) \tilde{S}_2 \right),
\end{equation}
where 
\begin{equation}
H(\epsilon)=\frac{(2 \epsilon -1) \left(\epsilon ^2-\epsilon +1\right)}{(\epsilon -1)^3}. \label{H}
\end{equation}
For $\epsilon \in [0,1/2]$, we can write $H(\epsilon)$ as a function of $\phi$,
\begin{equation}
H(\epsilon(\phi))=\frac{\left(\sqrt{4 \phi +1}-1\right)}{2\phi }.
\end{equation}
Let us expand $H(\epsilon)$ as a power series in $\phi$, around $\phi=0$. We find
\begin{equation}
    H(\epsilon(\phi)) = \sum_{j=0}^\infty H_j \phi^j, \quad  H_j = \frac{(-4)^j \left(\frac{1}{2}\right)_j}{(2)_j}, \label{Hseries}
\end{equation}
where $\left(\frac{1}{2}\right)_{m-1}$ denotes the Pochhammer symbol: $a_n=a (a+1) \ldots  (a+n-1)=\frac{\Gamma  (a+n)}{\Gamma (a)}$.
The first few powers are given by,
\begin{equation}
     H(\epsilon(\phi)) \approx 1-\phi+2 \phi ^2 -5 \phi ^3 +14 \phi ^4 -42 \phi ^5 +O(\phi^6).
\end{equation}

We now expand $f_{\rm MSD}(\epsilon)=\frac{M^{(-)}(\epsilon)}{2N(\epsilon)}$ as a power series in $\epsilon$, and demand that $f_{\rm MSD} = O(\epsilon^\nu)$.
Because $N(\epsilon) \to 1$ by hypothesis, this implies,
\begin{equation}
\tilde{S}_1+\frac{4}{9}H(\epsilon(\phi)) \tilde{S}_2 = O(\phi^{\frac{\nu-2}{3}}). \label{extremal-5}
\end{equation}
Crucially, only non-negative integer powers of $\phi$ appear on the LHS of Equation \eqref{extremal-5}. This means that $\frac{\nu-2}{3}$ is a positive integer which proves our result. 

For $n \equiv 1 \pmod 6$, we have,
\begin{equation}
M^{(+)}(\epsilon)=2\epsilon g(\epsilon)^{m} H(\epsilon)^{-1}\left( H(\epsilon)\tilde S_{1}-\frac{4}{9}\tilde S_{2} \right).
\end{equation}
We now demand that $f_{\rm MSD}(\epsilon)=O(\epsilon^\nu),$
which implies that,
\begin{equation}
 H(\epsilon(\phi))\tilde{S}_1+ \tilde{S}_2 = O(\phi^{\frac{\nu-1}{3}}). \label{extremal-1}
\end{equation}
Again, $\frac{\nu-1}{3}$ must be a non-negative integer because $H(\epsilon(\phi))$ can be expanded as a power series in $\phi$ containing only non-negative integer powers of $\phi$. This proves the theorem for class 1 codes. 
\end{proof}

Note that, in the above analysis, we demand the denominator of Equation \eqref{eq3.8} is non-zero when $\epsilon \to 0$,
\begin{equation}
\lim_{\epsilon \to 0} N(\epsilon) \neq 0.
\end{equation} 
If this condition is not satisfied the code would be useless for magic state distillation. For class $5$ codes that meet this condition, we generically   expect quadratic noise suppression. However, if there are extra cancellations on the LHS of Equation \eqref{extremal-5}, we can get higher order noise suppression such as $\nu = 5$, $8$, $11$, or higher. From Figure \ref{fig:search-results}, we see that all $[[17,1]]$ $M_3$-codes give rise to quadratic noise suppression, and no such cancellation takes place. For class $1$ codes with $N(0) >0$, we generically expect  a linear relation between input and output error-rates; but, with cancellations on the LHS of \eqref{extremal-1}, we could obtain larger noise suppression exponents such as $\nu=4$, $7$, $10$ etc. 

Note that, by concatenating the 5-qubit code with itself $z$-times, we obtain a $[[5^z,1]]$ $M_3$-code, with $\nu=2^z$. One can check that, when $z$ is even, $5^z \equiv 1 \pmod 6$ and $2^z \equiv 1 \pmod 3$, and when $z$ is odd, $5^z \equiv 5 \pmod 6$ and $2^z \equiv 2 \pmod 3$ in accordance with Theorem \ref{theorem-mod-3}. For this family of codes $\frac{\nu}{n} \to 0$ as $z \to \infty$. Aside from the concatenated codes, such as these, no distillation protocols for $\ket{T}$ states with $\nu \geq 2$ are known.

\subsubsection{Extremal weight enumerators for distillation} 
We can extend the analysis used to prove Theorem \ref{theorem-mod-3} to prove the following bound on $\nu$ as a function of $n$, using ideas similar to those used to define extremal weight enumerators by Mallows and Sloane  \cite{MALLOWS1973188}. We are interested in the noise-suppression exponent rather than the distance; in analogy to \cite{MALLOWS1973188}, we define an \textit{extremal weight enumerator for distillation}, to be one with the largest possible noise suppression exponent $\nu_{\rm max}$ consistent with invariant theory. We then show that, for any such enumerator, one of the $A_w$ is negative, which implies $\nu \leq \nu_{\rm max}$ for any realizable code.

\begin{theorem}
   For any distillation routine based on an $M_3$-code, 
   \begin{equation}
   \nu \leq n-3.
   \end{equation} \label{thm:extremal-no-go}
\end{theorem}
\begin{proof}
We first ignore both the classical and quantum constraints, and simply solve for $c_j$ and $d_j$ so that the LHS of \eqref{extremal-1} and \eqref{extremal-5} correspond to a value of $\nu$ that is as large as possible.

For class-$5$ codes, there are $2m+1$ unknown coefficients: $c_j$, for $j=1, \ldots, m$ and $d_k$, for $k=0, \ldots m$. By using these $2m+1$ unknowns to cancel the first $2m+1$ powers of $\phi$ that appear in the LHS of Equation \eqref{extremal-5}  -- we obtain $\frac{\nu_{\rm max}-2}{3}=2m+1$ for an extremal distillation routine for class 5 codes. Similar analysis can also be carried out for class 1 codes, for which we find $\frac{\nu_{\rm max}-1}{3}=2m$. The weight enumerators determined by this procedure are unique, and we refer to them as ``extremal enumerators for distillation''.

These results imply that $\nu_{\rm max}=n$, which, in contrast with the case of extremal self-dual codes, is a trivial upper bound. However, we can show that no codes with $\nu=\nu_{\rm max}$ exist. 

The extremal weight enumerators for distillation for $n<20$ that arise from this procedure are given in Table \ref{tab:extremal-msd-enumerators}. As can be seen from the table, we find that many coefficients of $A$ are negative. In general, one can show that $A_2$ is always negative:
\begin{equation}
    A_2 = \begin{cases} -30 - 81 m - 54 m^2 & \text{ for } n=6m+5, \\
    -9m - 54 m^2 & \text{ for } n =6m+1.
    \end{cases}
\end{equation}
 Therefore codes realizing extremal weight enumerators for distillation do not exist\footnote{Interestingly, however, the $A_w$ are all integers divisible by $3$ for $w >0$.}, and (using Theorem \ref{theorem-mod-3}) we have the result that $\nu \leq n-3$, as claimed. 
\end{proof}

In the next subsection, we use linear programming to place stronger bounds on $\nu$.

\begin{table}[ht]
    \centering
    \renewcommand{\arraystretch}{1.4} 
    \setlength{\tabcolsep}{8pt}      
    \begin{tabular}{@{}c p{0.75\linewidth}@{}}
        \toprule
        \textbf{$n$} & \textbf{$A(1, y)$} \\ 
        \midrule
        5  & $1 - 30y^2 + 45y^4$ \\
        7  & $1 - 63y^2 + 315y^4 - 189y^6$ \\
        11 & $1 - 165y^2 + 2970y^4 - 12474y^6 + 13365y^8 - 2673y^{10}$ \\
        13 & $1 - 234y^2 + 6435y^4 - 46332y^6 + 104247y^8 - 69498y^{10} + 9477y^{12}$ \\
        17 & $1 - 408y^2 + 21420y^4 - 334152y^6 + 1969110y^8 - 4725864y^{10} + 4511052y^{12} - 1487160y^{14} + 111537y^{16}$ \\
        19 & $1 - 513y^2 + 34884y^4 - 732564y^6 + 6122142y^8 - 22447854y^{10} + 36732852y^{12} - 25430436y^{14} + 6357609y^{16} - 373977y^{18}$ \\
        \bottomrule
    \end{tabular}
    \caption{Extremal weight enumerators for distillation for $n < 20$.}
    \label{tab:extremal-msd-enumerators}
\end{table}

\subsection{Linear programming bounds for the noise suppression exponent}

In this section, we seek to strengthen Theorem \ref{thm:extremal-no-go} to determine the largest value of $\nu$ attainable by weight enumerators consistent with \textit{all} classical and quantum constraints on the coefficients $c_i$ and $d_i$. This is essentially a linear programming problem, although the quantum constraints are not quite linear. To arrive at our main result, Theorem \ref{conjecture-1} below, we essentially solve this linear programming problem using a computer algebra system\footnote{We used the package Mathematica 12, and should emphasize that we did not rely on floating point arithmetic. Quantum constraints are difficult to implement for larger $n$ -- we only attempted to demand consistency with quantum constraints for $n<50$; for larger values of $n$, we only demanded that $A(1, \frac{i}{\sqrt{3}})\geq 0$.} Explicit examples illustrating exactly how this is done are given in Section \ref{sec:11-1} and Appendix \ref{app:small-examples}. 

\begin{theorem}
For all $n\leq 215$, the noise suppression exponent $\nu$ of a magic state distillation routine based on a non-trivial $M_3$-code is bounded from above as follows.

For codes of length $n=6m+1$,
\begin{equation}
    \nu \leq 3m-5.
\end{equation}
For codes of length $n=6m+5$,
\begin{equation}
    \nu \leq \begin{cases}
    2  & \text{    for }n=5,~11 \\3m-4 & \text{    for } n \geq 17.
    \end{cases}
\end{equation}
\label{conjecture-1}
\end{theorem}
Although we have only proven Theorem \ref{conjecture-1} via computer-algebra for $n \leq 215$, we assert that it is almost certainly true for all $n$ -- in principle, techniques such as those used by Rains \cite{rains} could be used to provide an analytical proof, although the combinatorial expressions involved may be rather tedious. 

Our computer search shows that the bounds of theorem \ref{conjecture-1} are saturated for $n \leq 17$. However, the bounds are not saturated for $n=19$; we found no code with noise suppression coefficient greater than $1$, while the Theorem \ref{conjecture-1} allows $\nu=4$. For $19 \leq n\leq 35$, we are able to find integral weight enumerators saturating the bounds of Theorem \ref{conjecture-1} that satisfy all quantum and classical constraints. Some of these are presented in Appendix \ref{app:integer-programming}. 

\section{Discussion}
\label{discussion}
In this work, we demonstrated that the physical constraints of quantum information processing impose rigorous new bounds on classical coding theory. We showed that the performance of a magic state distillation \cite{MSD} protocol based on a linear $GF(4)$ code is captured by its classical simple weight enumerator. This allows us to place new constraints on weight enumerators, which arise from ``quantum consistency'' of the magic state distillation routine. These new constraints allowed us to rule out the entire family of extremal type $4^H$ self-dual codes of length $12m$ -- a previously unresolved problem in classical coding theory \cite{macwilliams1978self, lam1990there, huffman1990extremal, huffman1991extremal}.
We were also able to use linear programming techniques to place bounds on the best attainable distances and to restrict the best attainable noise suppression exponents of an $[[n,1]]$ $M_3$-codes. 

Using these results, we carried out an exhaustive search of distillation routines via $M_3$-codes of size $n<20$. While we were not able to find a distillation routine with a threshold exceeding that of the 5-qubit code, we hope to extend this computational search to larger codes in the near future. Notably, we were able to construct many weight enumerators with integer coefficients that satisfy all known classical and quantum consistency conditions for codes of size $n\geq23$. If realizable, these enumerators would correspond to magic state distillation routines with better thresholds than the 5-qubit code. It remains a fascinating open question whether a physical quantum code realizing any of these weight enumerators actually exists.

Our findings also open several new avenues in classical coding theory and related fields. While we have identified two new quantum consistency constraints, it is unknown whether additional quantum constraints exist. Furthermore, is there a purely classical combinatorial interpretation of $A(1,\frac{i}{\sqrt{3}})$ that explains why it cannot be negative? Mallows and Sloane  \cite{mallows1981weight},showed that complete weight enumerators can provide stronger constraints on the existence of certain self-dual codes than simple weight enumerators; extending our analysis to complete weight enumerators may yield even tighter bounds on quantum codes. It would also be of great interest to extend these quantum constraints to place new bounds on Type II self-dual binary codes, and to investigate possible applications to the modular bootstrap program \cite{dymarsky2021solutions, dymarsky2021quantum, dymarsky2023fake, angelinos2022optimal, alam2023narain}.

Finally, from a quantum information perspective, we restricted our attention in this paper to distillation routines for the $\ket{T}$ magic state of Bravyi and Kitaev \cite{MSD}, which are at present the most mysterious and poorly understood. It would be valuable to apply these methods to study distillation routines for $\ket{H}$-type magic states. Furthermore, extending this framework beyond qubits to qudits of odd dimension presents a profound theoretical opportunity. For odd-prime qudits, non-stabilizerness has a precise operational meaning as contextuality with respect to stabilizer measurements \cite{nature}. Determining the best-attainable threshold for qudit magic state distillation would directly address whether contextuality is the essential feature of quantum mechanics responsible for the power of quantum computers. (See also \cite{PhysRevA.95.052334,PhysRevLett.122.140405, DPS2, Delfosse_2017}.) It would therefore be worthwhile to apply invariant theory to constrain magic state distillation routines for qudit states --particularly the qutrit strange state, which, as shown by Prakash and Singhal  \cite{prakash2024search}, is also characterized by an unsigned simple weight enumerator. 

\section*{Acknowledgments}
SP thanks Prof. P.S. Satsangi for inspiration and guidance. Both SP and ARK thank Eric Rains for several helpful comments and suggestions. We also thank Markus Grassl for pointing out typos in an earlier draft of this paper. ARK would like to thank Debbie Leung, Michele Mosca and Graeme Smith for useful discussions.

SP acknowledges the support of MeitY QCAL, Amazon Braket and DST-SERB grant (CRG/2021/009137). SP also thanks the International Centre for Theoretical Sciences (ICTS), Bengaluru for hospitality during the program ``A Hundred Years of Quantum Mechanics" (ICTS/qm100-2025/01) where part of this work was completed.
ARK acknowledges NTT research for financial support.  Research at Perimeter Institute and IQC is supported by the Government of Canada through Innovation, Science and Economic Development Canada, and by the Province of Ontario through the Ministry of Research, Innovation and Science.

\appendix

\section{Linear programming bounds for small $n$}
\label{app:small-examples}
In this appendix we explicitly construct the space of weight enumerators of  $[[n,0]]$ $M_3$-codes and $[[n,1]]$ $M_3$-codes consistent with invariant theory and the classical and quantum linear programming bounds,  for small values of $n$. The examples in this appendix serve two purposes: they illustrate how the quantum constraints drastically reduce the space of consistent weight enumerators, and they illustrate how Theorem \ref{conjecture-1} was obtained. 

\subsection{$[[n,0]]$ $M_3$-states}

We will always fix $c_0$ by demanding $A_1=1$. Then for $n<6$, there is no non-trivial weight enumerator. 
\subsubsection*{Bounds for $6 \leq n \leq 10$}
For $n = 6$, $8$, and $10$, the weight enumerator depends on a single undetermined coefficient $c_1$. The classical bounds, $A_i\geq 0$ translate to:
\begin{eqnarray}
    n&=6:  \quad & -9 \leq c_1 \leq \frac{27}{2},  \\
    n&=8:  \quad& -12 \leq c_1 \leq \frac{108}{5}, \\
    n&=10: \quad & -15 \leq c_1 \leq \frac{135}{4}. 
\end{eqnarray}
The quantum bound reduces to simply $c_1 \leq 0$. 
These are plotted, along with points representing weight enumerators of all codes that exist (from Munemasa's database \cite{munemasa_codes_website}), in Figure \ref{fig:small-code-bounds}. We see that, when the quantum bound is included, both sides of the linear programming bounds are saturated.

\begin{figure}
   
    \hspace{2cm}\includegraphics[width=0.53\linewidth]
    {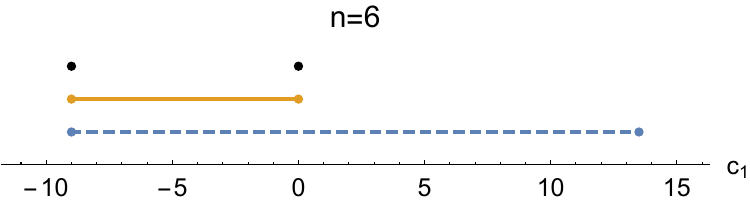}\vspace{0.5cm}
    
    \hspace{2cm}\includegraphics[width=0.53\linewidth]{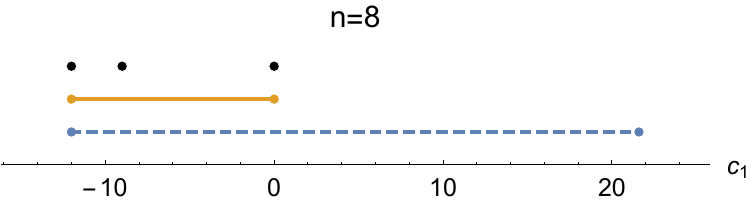}
    \vspace{0.5cm}
    
    \hspace{2cm}\includegraphics[width=0.8\linewidth]{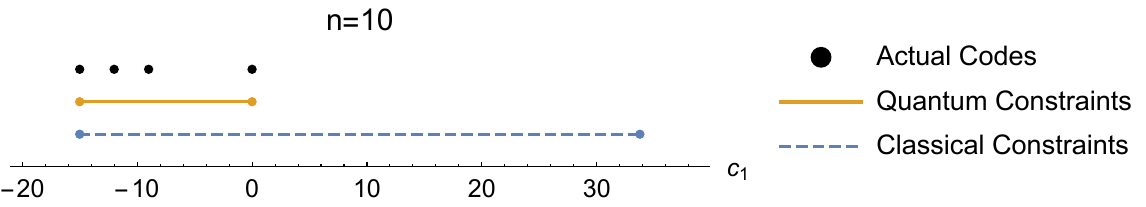}
    \caption{Weight enumerators for self-dual linear $[n,n/2]_{GF(4)}$codes with $n=6$, $8$ and $10$ are parameterized by a single real number $c_1$. The range of $c_1$ allowed by classical constraints, $A_i\geq0$, is shown as a blue dashed line, and the range allowed by quantum constraints is shown as an orange solid line. Weight enumerators of all self-dual codes that exist are shown as black points. }
    \label{fig:small-code-bounds}
\end{figure}

\subsubsection*{Bounds for $12 \leq n \leq 16$}
For $n=12$, $14$ and $16$ a general weight enumerator is determined by two coefficients $c_1$ and $c_2$. The classical bounds for $n=14$ and $16$ are derived in the same manner as for $n=12$ in the main text. The quantum bound remains $1-c_1 \phi+c_2 \phi^2 \geq 0$ (for $\phi \in [0,\infty)$), which again reduces to
\begin{equation}
    c_2 \geq \begin{cases}
         0 & c_1 <0 \\
         c_1^2/4 & c_1 \geq 0,
    \end{cases}
\end{equation}
as for $n=12$. Linear programming bounds for $n=14$ and $n=16$ are shown in Figure \ref{fig:14-16-codes}.

\begin{figure}
    \centering
    \includegraphics[height=0.40
\linewidth]{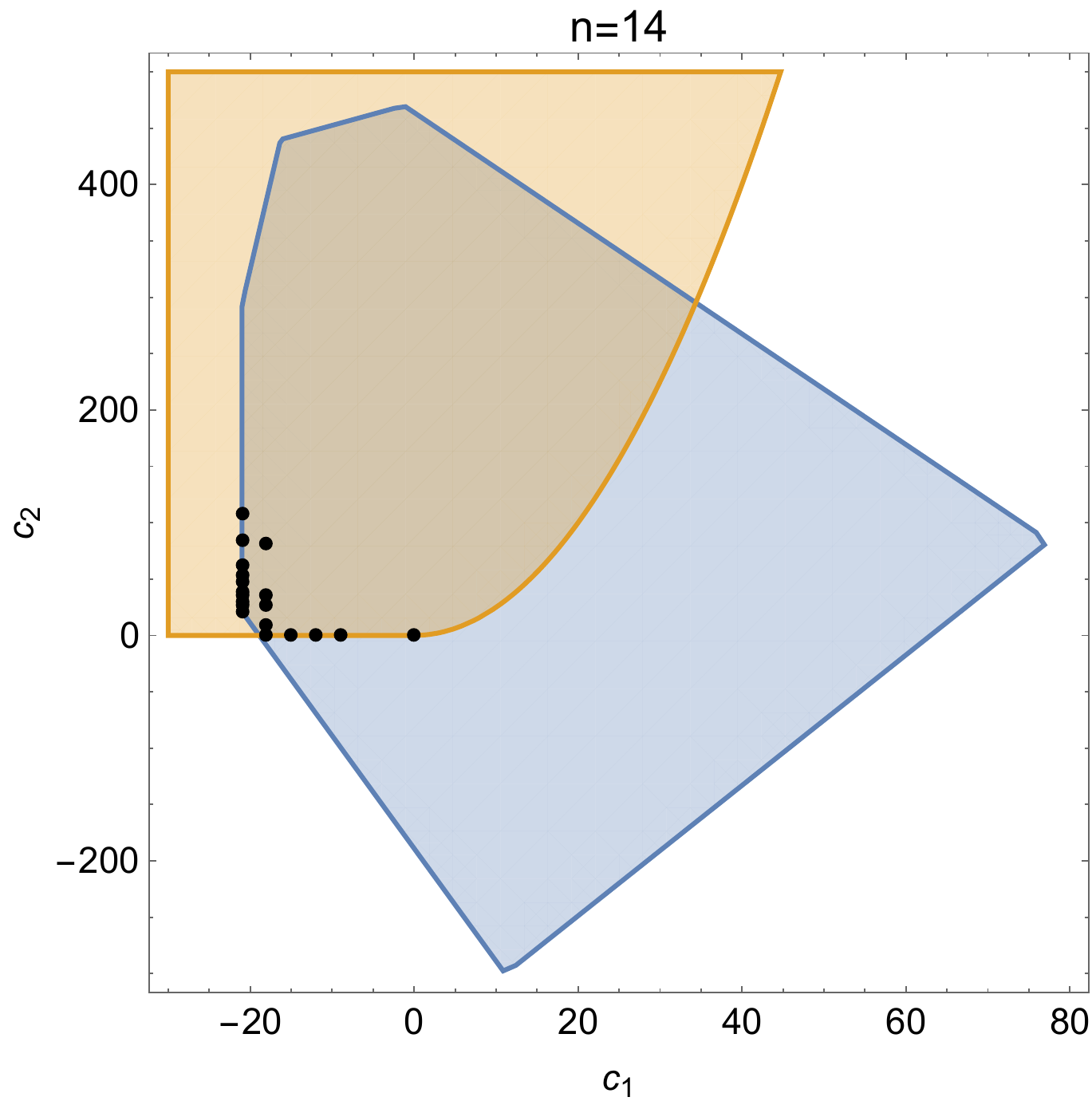}
    \includegraphics[height=0.42\linewidth]{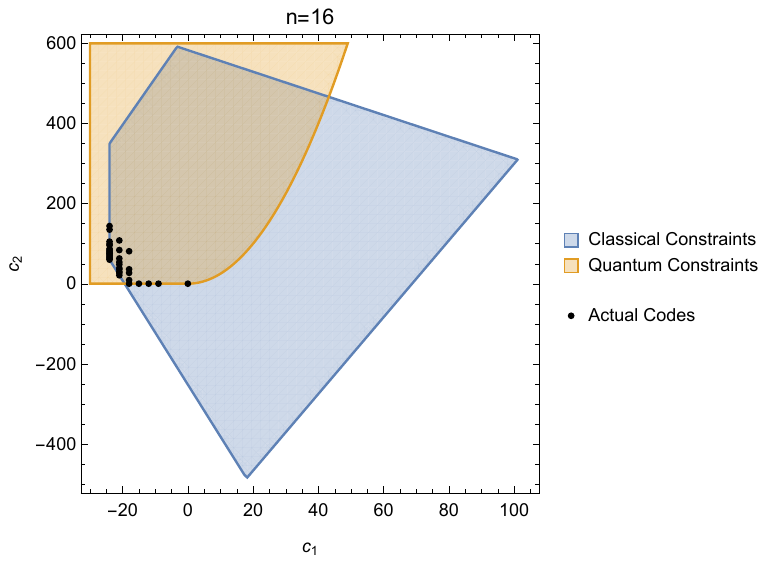}
    \caption{Classical (blue) and quantum (orange) linear programming bounds for classical self-dual linear  $[14,7]_{GF(4)}$ (left) and $[16,8]_{GF(4)}$ (right) codes. All self-dual codes of these lengths which exist are shown as black points.}
    \label{fig:14-16-codes}
\end{figure}

\subsection{$[[n,1]]$ $M_3$-codes}
Here we illustrate the role played by the new quantum constraints in constraining the space of consistent weight enumerators for small values of $n$. 

\subsubsection*{Illustration for $n=5$}
\label{sec:n5}
Let us constrain the space of allowed $[[5,1]]$ $M_3$-codes. We will use the notation of the proof of Theorem \ref{theorem-mod-3}.  
After demanding $A_0=1$, a general weight enumerator depends on only one undetermined coefficient $d_{0}$: 
\begin{equation}
A(x,y)=x^{5}+(6+d_{0})x^3y^2+(9-d_{0})xy^{4}.
\end{equation}
Equation \eqref{extremal-5}, which determines the  noise-suppression-exponent $\nu$, becomes, 
\begin{eqnarray}
    \tilde{S}_1+H(\epsilon(\phi)) \tilde{S}_2 =
    16 + \frac{4}{9}(1-\phi + \ldots) d_0 = O(\phi^{\frac{\nu-2}{3}}). 
\end{eqnarray}
We choose ${d}_{0}$ so that the order $\phi^0$ term vanishes:
\begin{equation}
d_{0}=-36,
\end{equation}
to obtain a noise suppression exponent,
\begin{equation}
\frac{\nu-2}{3}=1 \implies \nu=5.
\end{equation}
This condition defines the extremal weight enumerator for distillation. Substituting into the expression for $A(x,y)$ we find,
\begin{equation}
A(x,y)=x^{5}-30x^{3}y^{2}+45xy^{4}.
\end{equation}
The coefficient of $x^{3}y^{2}$ is negative. Therefore, the extremal weight enumerator for distillation cannot be realized, and we have the bound that $\nu \leq 2$.

We now leave $d_0$ a free parameter, and determine its allowed range of $d_0$ via classical and quantum constraints. We write $B(x,y)$ as a function of $d_{0}$ and find,
\begin{equation}
B(x,y)=x^{5}+\left(3+\frac{d_{0}}{2}\right)x^{4}y+\left(6+d_{0}\right)x^3y^2+\left(18-2d_{0}\right)x^{2}y^{3}+\left(9-d_{0}\right)xy^{4}+\left(27+\frac{3}{2}d_{0}\right)y^{5}.
\end{equation}
We find that the conditions $\{B_w \geq 0\}$ for all $w$ reduce to  
\begin{equation}
-6 \leq d_0 \leq 9.
\label{5-qubit-eqconstraint1}
\end{equation}
This is the solution to the classical constraints on weight enumerators.

We now demand that the success probability, $2^{-n+1}W_I(\bar{r})=2^{-n+1} A(1,ir)=2^{-n} N(\epsilon)$, be non-negative for $-\frac{1}{3} \leq \bar{r}^2 \leq \frac{1}{3}$. We find,
\begin{equation}
A(1,ir)=1-(6+d_{0})\bar{r}^2+(9-d_{0})\bar{r}^4.
\end{equation}
At $\bar{r}=\frac{1}{3}$, this evaluates to $\frac{-4d_{0}}{9}$, which implies $d_{0}\leq 0$. Combining the quantum and classical constraints we therefore have, 
\begin{equation}
-6 \leq d_0 \leq 0.
\label{5-qubit-eqconstraint2}
\end{equation}
It turns out both sides of this inequality are saturated by 5-qubit $M_3$-codes. The weight enumerator of the 5-qubit code is obtained by taking $d_0=-6$. There is one other 5-qubit $M_3$-code, whose weight enumerator is obtained by setting $d_0=0$. (This other code has zero probability of successfully projection on pure $\ket{T}$ states, so it is useless for magic state distillation.) Figure \ref{fig:5-linear-programming} shows the region allowed by classical constraints, the quantum constraints, and the two actual weight enumerators of $[[5,1]]$ $M_3$-codes.

\begin{figure}
    \centering
    \includegraphics[width=0.8\linewidth]{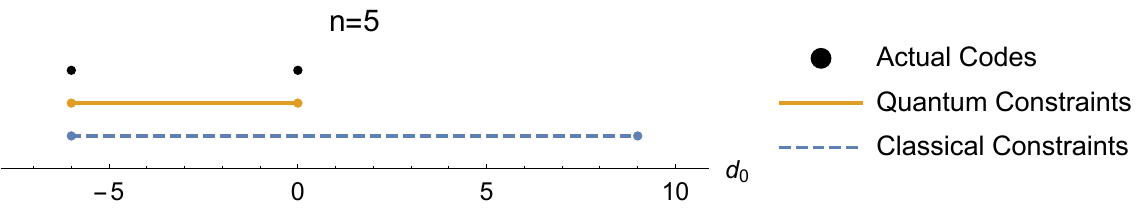}
    \caption{Linear programming bounds for $n=5$ weight enumerators. The dashed blue line corresponds to the classical linear programming constraints (Equation \ref{5-qubit-eqconstraint1}) and the yellow line corresponds to the bound from the quantum success probability constraint (Equation \ref{5-qubit-eqconstraint2}). {  The} black  points are weight enumerators for the two inequivalent $[[5,1]]$ $M_3$-codes, which saturate the quantum constraints. 
    }
    \label{fig:5-linear-programming}
\end{figure}

\subsubsection{$n=7$}
The weight enumerator for a $[[7,1]]$ $M_3$-code depends on two undetermined coefficients, $c'_1$ and $d'_0$, and is
\begin{equation}
    A(1, y) = 1 
            + y^2 \left( c_1' + d_0' + 9 \right) 
            + y^4 \left( -2c_1' + 2d_0' + 27 \right) 
            + y^6 \left( c_1' - 3d_0' + 27 \right).
\end{equation}
The classical linear programming constraints, $B_w \geq 0$  reduce to,
\begin{align*}
    d'_0  & \geq -6, \\
    c'_1 + d'_0 & \geq -9, \\
    6c'_1 -d'_0 & \geq -54, \\
    c'_1 -d'_0  & \leq \frac{27}{2}, \\
    4c'_1 + 3d'_0 & \leq 54, \\
    c'_1 - 3d'_0 & \geq -27 . 
\end{align*}
The convex polytope defined by these inequalities is shown as the blue hexagon in Figure \ref{fig:7-1-linear-programming}.

Let us now look at the additional quantum constraints that arise from viewing the code as a potential magic state distillation routine. The constraint that $N(0)\geq 0$ translates into $c_1'\leq 0$. This is a special case of the infinite family of constraints that $N(\epsilon) \geq 0$ $\forall\epsilon \in [0,1]$, which reduce to a non-linear constraint on $c_1'$ and $d_0'$, shown in green in Figure \ref{fig:7-1-linear-programming}. The distillation threshold constraint depends on the choice of logical operators, and holds for both choices. The region allowed by both these constraints is shown as the orange region in Figure \ref{fig:7-1-linear-programming}, which is non-convex. 

We see from Figure \ref{fig:7-1-linear-programming} that both quantum constraints are independent of each other, as well as the classical constraints. Moreover, the four $[[7,1]]$ $M_3$-codes that exist lie on the extremal points of the intersection of the quantum and classical constraints. 

Let us illustrate how the bound on noise suppression exponent from Theorem \ref{conjecture-1} arises for this case. The line corresponding to enumerators with noise suppression exponent $\nu \geq 4$ corresponds to the equation $c'_1=-3d'_0$, which is shown in gray in Figure \ref{fig:7-1-linear-programming}. It only intersects the allowed region at the point one $(d_0',c_1')=(0,0)$, which has zero success probability at $\epsilon=0$ and is therefore useless for magic state distillation. In particular, it intersects the line of non-trivial codes, specified by $C_1=0 \implies d_0'=-6$, far outside the blue region, hence demonstrating that $\nu \leq 1$ purely from linear programming.

\begin{figure}
    \centering
    \includegraphics[width=0.95\linewidth]{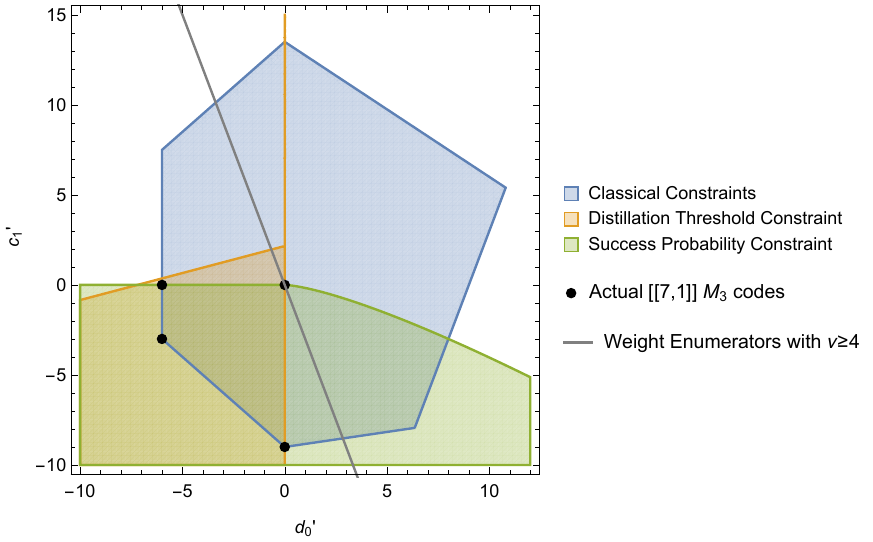}
    \caption{Weight enumerators for $[[7,1]]$ $M_3$-codes are characterized by two real parameters $d_0'$ and $c_1'$. The blue region is defined by the classical constraints $B_i \geq 0$. The orange region is defined by the quantum constraint that the threshold lie outside the stabilizer octahedron, $\epsilon_* \leq \epsilon_{\rm max}$, and the green region is defined by  the quantum constraint that the success probability be non-negative, $N(\epsilon)\geq 0$, for all $\epsilon<\epsilon_{\rm max}$.  Black points denote all $[[7,1]]$ $M_3$-codes, which lie at the boundary of the intersection of all three regions. The line corresponding to enumerators with noise suppression exponent $\nu \geq 4$ is shown in gray -- it only intersects one enumerator which has zero success probability.}
    \label{fig:7-1-linear-programming}
\end{figure}

It is also possible to solve the integer linear programming problem exactly for this case. One can check that, for $c'_1 \equiv 0 \pmod 3$ and $d'_0 \equiv 0 \pmod 6$, solutions to the linear programming problem become solutions to integer programming, where we have the additional constraint that $B_i$ are integers  such that $B_i \equiv 0 \pmod 3$.  Without quantum constraints, there are $18$ integer solutions. This is reduced to 6 integer solutions when the two quantum constraints are included.\footnote{The two putative weight enumerators that do not correspond to codes are defined by $(d_0',c_1')=(0,-3)$ and $(0,-6)$; they would both correspond to trivial distillation routines which would have $\epsilon_{\rm out}=\epsilon$.}

\subsubsection{$n=13$}
The case of $n=11$ was covered in the main text. We now turn to $n=13$. A general weight enumerator for $n=13$ depends on four coefficients, but we can eliminate $d_0'$ and $c_1'$ by demanding $C_1=0$ and $A_2=0$. A plot of all $[[13,1]]$ codes meeting these constraints is shown in Figure \ref{fig:13-linear-programming}. The proof of Theorem \ref{conjecture-1} for this value of $n$ follows from observing that the line determined by demanding $\nu \geq 4$ never intersects the allowed region. 

\begin{figure}
    \centering
    \includegraphics[height=0.5\linewidth]{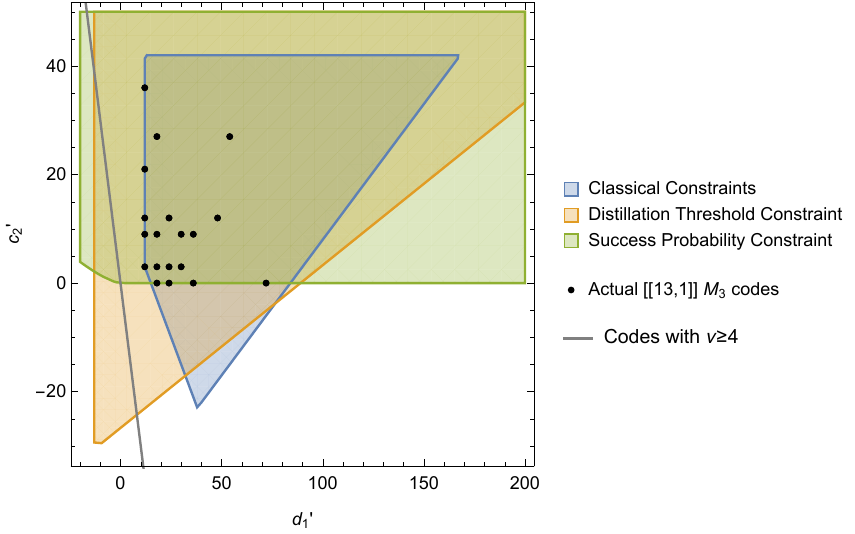}
    \caption{Linear programming bounds for $n=13$, in the plane defined by $C_1=A_2=0$. Regions are labeled as in Figure \ref{fig:7-1-linear-programming}. The gray line denotes those enumerators for which $\nu \geq 4$, and never intersects the allowed region.}
    \label{fig:13-linear-programming}
\end{figure}

\subsubsection{$n=17$}
\label{sec:17-1}
A general weight enumerator for $n=13$ depends on 5 coefficients. We can eliminate $d_0'=-6$ and $c_1'=-18$ by demanding $C_1=0$ and $A_2=0$. We can eliminate one more variable by demanding either $\nu \geq 5$ or $C_3=0$. 

We find that, when restricting to the $\nu \geq 5$ region, no weight enumerator satisfies the threshold constraint for all $\epsilon_{\rm in} \in [\epsilon_{\rm max}, 1/2]$, so $\nu=5$ is ruled out. Plots of the allowed regions for enumerators  with all $[[17,1,d]]$ $M_3$-codes with $d>3$ are shown in Figure \ref{fig:17-linear-programming-d}.

\begin{figure}
    \centering
    \includegraphics[height=0.5\linewidth]{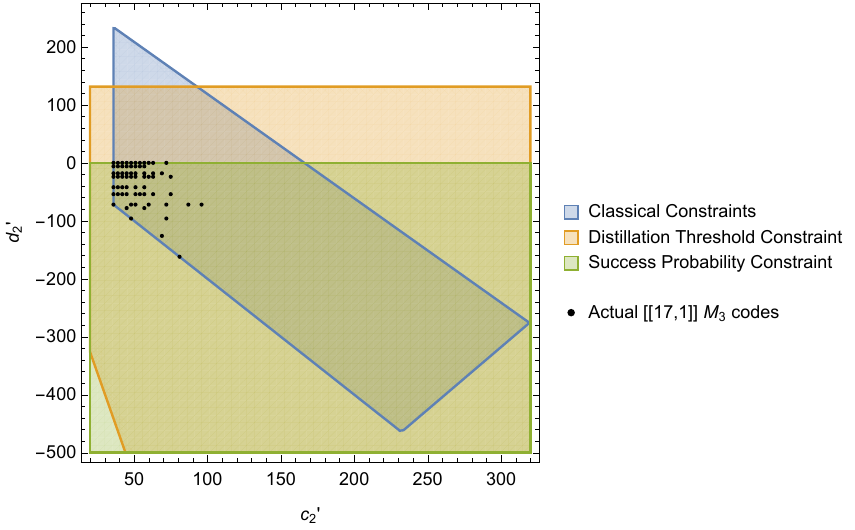}
    \caption{Linear programming bounds for $n=17$, in the plane defined by $C_1=A_2=0$ and $d \geq 5$. Regions are labeled as in Figure \ref{fig:7-1-linear-programming}, and actual codes are shown as black points. }
    \label{fig:17-linear-programming-d}
\end{figure}

\subsection{More linear programming bounds}
\label{app:distance}

We solved the linear programming problem for $[[n,1]]$ $M_3$-codes as in the previous subsection for larger values of $n$ on a computer algebra system. We find that one can obtain stronger bounds on the distance of $[[n,1]]$ $M_3$-codes than given in Theorem \ref{thm:distance-M_3-code}. These are:
\begin{theorem}
The distance of an $[[n,1,d]]$ $M_{3}$-code satisfies $d<d_{\max}$, with  
\begin{equation}
    d_{\text{max}} \leq \begin{cases}
        2m-1  & n = 6m+1 \in [121, 229 ], \quad
        n=6m+3 \in [135, 243 ]\\
        2m+1  & n = 6m+5 \in [101,209].
    \end{cases}
\end{equation}
for $101\leq n \leq 209$.
\end{theorem}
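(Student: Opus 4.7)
The plan is to strengthen the classical linear-programming bound of the preceding theorem by adding the two quantum constraints developed in Section~\ref{sec:additional-constraints}, and then to certify infeasibility of the resulting finite-dimensional linear program for each $n$ in the stated range using a computer algebra system.

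First, for each fixed $n$ I would use the invariant parametrization of Theorem~\ref{invariant-theorem} to express $A(x,y)$, and hence $B(x,y)$ and $C(x,y)$ via Equations~\eqref{Beq} and~\eqref{Ceq}, as linear functions of the $\lfloor(n-1)/6\rfloor+\lfloor(n-5)/6\rfloor+2$ free parameters $c_j', d_k'$, with $c_0'=1$ fixed by $A_0=1$. A hypothetical distance $d\geq d_{\max}$ forces $C_i=0$ for all odd $i<d_{\max}$; since $A$ contains only even and $C$ only odd powers of $y$, the MacWilliams identity also forces $A_{2j}=0$ for $0<2j<d_{\max}$. These $d_{\max}-1$ linear equations are used to eliminate as many of the $c_j',d_k'$ as possible, leaving a low-dimensional residual feasibility region.

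On this residual region I would then impose three families of inequality constraints: (i) the classical non-negativity constraints $B_i\geq 0$ for all $i$; (ii) the success-probability constraint $N(\epsilon)\geq 0$ for all $\epsilon\in[0,1]$, including the linear specialization $N(0)\geq 0$ which gives $c_m\geq 0$ for $n=6m+1$ and $-d_m\geq 0$ for $n=6m+5$, with the analogue for $n=6m+3$; and (iii) the threshold constraint $\epsilon_{\rm out}^{(\pm)}(\epsilon_{\rm max})\geq \epsilon_{\rm max}$ for both sign choices of the logical operators, which, as noted in Section~\ref{sec:additional-constraints}, becomes linear once $N(0)\geq 0$ is assumed. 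The full constraint (ii) would be enforced by the sufficient Bernstein-basis certificate of Section~\ref{sec:additional-constraints}: demand that $N(\epsilon)$ admits a non-negative Bernstein expansion at some degree $\tilde n\geq n$, which is a finite family of linear inequalities in the Bernstein coefficients. For each $n$ in $[121,229]$, $[135,243]$, and $[101,209]$, I would then hand the resulting LP to Mathematica and certify that it is infeasible, exactly as in the proof of the preceding theorem on $d_{\max}$ and Theorem~\ref{conjecture-1}.

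The main obstacle is the genuinely infinite-dimensional constraint $N(\epsilon)\geq 0$ on $[0,1]$: merely imposing $N(0)\geq 0$ is not strong enough, as illustrated for $n=11$ in Section~\ref{sec:small-examples}. The Bernstein certificate at fixed $\tilde n$ is sufficient but not necessary, so in principle one may need to iteratively raise $\tilde n$ or fall back on a sum-of-squares certificate if infeasibility cannot be established at the first attempted degree. A secondary, more mundane obstacle is computational: the number of free parameters grows like $n/3$, so the symbolic LP becomes expensive near the upper end of the stated ranges, and a brute-force Mathematica solve eventually times out. This is presumably why the ranges stop where they do; extending them further would require either a numerically certified SDP solver or a closed-form attack along the lines of \cite{rains2002self, rains2003new}.
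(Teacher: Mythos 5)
Your overall framework (parametrize $A$, $B$, $C$ via Theorem \ref{invariant-theorem}, impose the distance conditions, and certify infeasibility of the resulting linear program in Mathematica for each $n$ in the stated ranges) is the paper's method, but two steps in your proposal are logically invalid for a nonexistence argument. First, your claim that a distance $d\geq d_{\max}$ ``also forces $A_{2j}=0$ for $0<2j<d_{\max}$ via the MacWilliams identity'' is false: the distance is the smallest $i$ with $C_i\neq 0$, and a degenerate code can perfectly well have low-weight stabilizers, i.e.\ $A_{2j}\neq 0$ for $2j<d$. The correct feasibility system imposes only $A_0=1$ and $C_i=0$ for $i<d_{\max}$ (as the paper states); if you additionally force $A_{2j}=0$, infeasibility of your LP only excludes \emph{pure} codes of that distance and does not prove the theorem. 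Second, enforcing $N(\epsilon)\geq 0$ on $[0,1]$ through a non-negative Bernstein expansion at a fixed degree $\tilde n$ is a sufficient-but-not-necessary condition. In an exclusion argument every imposed constraint must be \emph{necessary} for actual codes: an actual $M_3$-code could satisfy $N(\epsilon)\geq 0$ without admitting a non-negative Bernstein certificate at your chosen degree, so infeasibility of the augmented LP would not rule it out. The paper uses the Bernstein trick only to \emph{generate} candidate enumerators, never to exclude codes; if you want the success-probability constraint in an exclusion LP you must keep only necessary consequences of it, e.g.\ $N(0)\geq 0$ or $N(\epsilon_k)\geq 0$ at finitely many sampled points.

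Beyond these gaps, your plan misidentifies where the strengthening comes from. This theorem is proved in the paper using the \emph{classical} constraints alone ($A_i,B_i,C_i\geq 0$): solving that LP on a computer algebra system shows it is already infeasible at the larger distances for $n=6m+5\in[101,209]$, $n=6m+1\leq 229$ and $n=6m+3\leq 243$; the earlier, weaker bound is simply the statement that holds uniformly in $n$. The paper explicitly reports that adding the quantum constraint $N(0)=A(1,\tfrac{i}{\sqrt 3})\geq 0$ gives nothing new in exactly these ranges -- the quantum constraint improves the distance bound only for $n=6m+5<101$, which is the content of the separate Theorem \ref{distance-theorem}. So once you repair the two gaps above (drop the spurious $A_{2j}=0$ conditions and either drop the quantum constraints or impose only necessary, pointwise versions of them), your computation would still establish the result, but the quantum constraints are doing no work here and can simply be omitted.
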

For larger $n$ we have the following conjecture which is consistent with computational evidence,
\begin{conjecture}
The distance of an $[[n,1,d]]$ $M_{3}$-code satisfies $d<d_{\max}$, with  
\begin{equation}
    d_{\text{max}} \leq \begin{cases}
        2m-3  & n = 6m+1\geq 235, ~
        n=6m+3 \geq  249 \\
        2m-1  & n = 6m+5\geq 215.
    \end{cases}
\end{equation}
\end{conjecture}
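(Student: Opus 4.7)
The plan is to extend the linear-programming strategy that produced the preceding theorem, supplementing the classical constraints ($A_i, B_i, C_i \geq 0$, plus $B_i$ divisible by $3$ for $i \geq 1$) with the two quantum constraints introduced in Section \ref{sec:additional-constraints}: non-negativity of $N(\epsilon)$ on $[0,1]$, and $\epsilon_{\rm out}^{(\pm)}(\epsilon_{\rm max}) \geq \epsilon_{\rm max}$ for both sign choices of the weight-$n$ logical operators. Parametrize $A(x,y)$ via Theorem \ref{invariant-theorem} with free coefficients $c_j'$ and $d_j'$, fix $c_0' = 1$, $d_0' = -6$ (from $A_0=1$, $C_1=0$), and encode the assumed distance by imposing $C_j = 0$ for all odd $j < d_{\max}$ using Equation \eqref{Ceq}. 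The conjecture then amounts to the claim that this constrained feasibility problem has no solution in the stated ranges.

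The first step would be to extend the computational verification already performed for $n \leq 243$ to larger $n$ case by case, exactly as was done for the preceding theorem. To turn the infeasibility into a provable theorem valid for all large $n$ in each residue class modulo $6$, I would seek explicit dual certificates in the spirit of \cite{rains2002self, rains2003new}: a non-negative polynomial combination of the primal constraints whose evaluation against any putative $A(x,y)$ contradicts the assumed distance. A natural ansatz is to build the certificate out of Krawtchouk-type kernels adapted to the $\hat f, \hat g$-grading, so that the certificate's dependence on $m$ is governed by a small number of parameters stable under $m \mapsto m+1$. The quantum constraints would enter the certificate through evaluations at $\epsilon = \epsilon_{\rm max}$ and through the leading coefficient in the expansion of $N(\epsilon)$ near $\epsilon = 0$ (which, as noted in Section \ref{sec:additional-constraints}, reduces to $-d_m \geq 0$ or $c_m \geq 0$).

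The main obstacle is that the gap between the theorem ($d_{\max} \leq 2m+1$ for $n = 6m+5$) and the conjecture ($d_{\max} \leq 2m-1$) of width $\Delta d = 2$ (and analogously $\Delta d = 4$ in the other residue classes) cannot be closed by classical certificates alone: if it could, linear programming applied to general additive codes, as in \cite{rains}, would already give these bounds, which it does not. The certificates must therefore genuinely involve the quantum constraints. The hard part is identifying the uniform family of quantum dual certificates: the threshold constraint is rational rather than linear in $(c_j', d_j')$, and $N(\epsilon) \geq 0$ pointwise is an infinite family of linear constraints whose active set presumably changes with $m$. I would first work out, for a single tractable residue class (say $n = 6m+5$), which finite collection of evaluation points $\epsilon_1, \ldots, \epsilon_N \in [0, \epsilon_{\rm max}]$ suffices to rule out the distance $2m+1$ case; if this collection stabilizes as $m$ grows, it yields the desired uniform certificate.

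Finally, even granting a uniform certificate, one must still verify the small-$n$ threshold (i.e.\ that the conjectured bound kicks in precisely at $n = 235, 249, 215$ rather than earlier), which is where direct numerical enumeration in Mathematica or a comparable system would be combined with the analytic argument to close the gap between the theorem and the conjecture.
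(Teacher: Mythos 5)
The statement you are addressing is a \emph{conjecture} in the paper, not a proven theorem: the authors support it only by computational evidence, namely that the classical linear-programming bounds (non-negativity of $A_i$, $B_i$, $C_i$ in the invariant-theory parametrization of Theorem \ref{invariant-theorem}) keep strengthening as $n$ grows, and the conjecture simply extrapolates the pattern observed for $101 \leq n \leq 243$ to $n \geq 215$, $235$, $249$. Your plan to extend the case-by-case LP computations and then look for uniform dual certificates in the spirit of \cite{rains2002self, rains2003new} is aligned with how one would try to turn this into a theorem, and the paper itself suggests exactly this for its noise-suppression bound.

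However, your central structural claim is wrong and would send the proof in the wrong direction. You assert that the gap between the proven bounds and the conjectured ones ``cannot be closed by classical certificates alone,'' because otherwise the additive-code LP of \cite{rains} would already yield these bounds. This conflates two different feasible sets: Rains' LP runs over weight enumerators of general \emph{additive} $GF(4)$ codes, whereas here the enumerator is forced by Theorem \ref{invariant-theorem} into the much smaller Gleason-type family generated by $\hat f$ and $\hat g$ (roughly $n/3$ free parameters), which is special to \emph{linear} self-orthogonal codes. It is precisely this restriction, with purely classical constraints $A_i, B_i, C_i \geq 0$, that produces the stronger bounds of the intermediate theorem ($d_{\max}\leq 2m-1$ for $n=6m+1\in[121,229]$, etc.), and the conjecture is the continuation of that same classical computation. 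Indeed, the paper explicitly reports that the quantum constraint $N(0)=A(1,i/\sqrt{3})\geq 0$ gives \emph{no} improvement over classical LP for $n\geq 101$ (and for $n=6m+1\leq 229$, $n=6m+3\leq 243$); the quantum constraints only sharpen the distance bound in the small-$n$ regime $n=6m+5<101$ (Theorem \ref{distance-theorem}). So a proof strategy built around quantum dual certificates (evaluations at $\epsilon_{\rm max}$, the sign of the leading coefficient of $N(\epsilon)$) is not what is needed; the task is to certify infeasibility of the classical LP in the $\hat f,\hat g$ parametrization uniformly in $m$, with the small-$n$ onset values ($215$, $235$, $249$) checked directly, and your added integrality/divisibility conditions are likewise extraneous to the stated bound.
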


\section{Integer programming}
\label{app:integer-programming}

In this appendix, we attempt to generate weight enumerators for putative codes with high thresholds and noise suppression exponents with integer coefficients that meet all classical and quantum consistency conditions. Recall that, if an $M_3$ code contains a stabilizer $P$, it also contains $P'=M_3^{\otimes n} P (M_3^\dagger)^{\otimes n}$ and $P''=M_3^{\otimes n} P (M_3^\dagger)^{\otimes n}$. Because $\wt(P)=\wt(P')=\wt(P'')$, the number of stabilizers of non-zero weight must be divisible by three.  

Solving for integral weight enumerators is essentially an intractable problem. However, we found an effective strategy that works in practice. We first choose to treat the $B_w$ as independent variables rather than $c_j$ and $d_j$. We choose to demand a noise suppression exponent $\nu=2m+1-L$. This leaves us with $L$ undetermined variables:  $\{B_1, \ldots, B_L\}$. We demand that these are integers divisible by three, and try to find a solution to quantum and classical constraints by brute force. The other $B_w$ for $w>L$ are determined in terms of $\{B_1, \ldots, B_L\}$. Usually, when we find a solution, all the other $B_w$ for $w>L$ also turn out to be integers divisible by three.

We list some putative weight enumerators for $n \leq 35$, that pass all our consistency constraints, as well as the non-triviality constraint $B_1=0$ and $B_2=0$, with large noise suppression exponent. Of course, most solutions to integer linear programming constraints, including our new quantum constraints, do not correspond to codes. However, we tried to construct examples with large distance and low degeneracy, that may be somewhat more likely to exist than generic weight enumerators. 

For $n=19$, we find several weight enumerators with $\nu=4$, such as, 
\begin{equation}
\begin{aligned}
    A(1, y) = & \, 1 + 36 y^6 + 1194 y^8 + 9108 y^{10} + 53736 y^{12} + 103404 y^{14} + 
 80877 y^{16} + 13788 y^{18},
\end{aligned}
\end{equation}
which would give rise to a code with $\epsilon_{\rm out} \approx  395 \epsilon^4 +O(\epsilon^5)$. However, none of the weight enumerators saturating Theorem \ref{conjecture-1} correspond to actual codes.

For $n \geq 23$, we can find many putative weight enumerators with $\nu \geq 5$  with integer coefficients satisfying all our constraints. A complete classification of $M_3$-codes of size $n \geq 23$ does not exist in the literature, so we do not yet know if codes possessing these weight enumerators exist.

We first list some class 5 enumerators, with $\nu \geq 5$:
\begin{itemize}
    \item A weight enumerator for a putative $[[23,1,7]]$ code with $\epsilon_{\rm out} \approx  587 \epsilon ^5 +O(\epsilon^6)$, and threshold $\epsilon_*=0.175343$:
 \begin{equation}
    \begin{split}
    A(1, y) = & 1+90 y^6+1314 y^8+348 y^{10}+107280 y^{12} \\ & +434880
   y^{14}+1282869 y^{16}+1543428 y^{18}+738072
   y^{20}+86022 y^{22}.
    \end{split}
\end{equation}
    This weight enumerator saturates both the bound on $\nu$ in Theorem \ref{conjecture-1} and the bound on distance in Theorem \ref{thm:distance-bound}, so it would be particularly interesting to know whether or not a corresponding code exists.

 \item A weight enumerator for a putative $[[29,1,7]]$ code with $\epsilon_{\rm out} \approx  \frac{52999 \epsilon ^8}{2} +O(\epsilon^9)$, and threshold $\epsilon_*=0.211288$:
\begin{equation}
\begin{aligned}
    A(1, y) = & \, 1 + 810y^6 + 8985y^8 + 13134y^{10} + 19728y^{12} + 362820y^{14} \\
              & + 6283203y^{16} + 24574140y^{18} + 64556616y^{20} + 91398066y^{22} \\
              & + 63125091y^{24} + 16705494y^{26} + 1387368y^{28}.
\end{aligned}
\end{equation}

 \item A weight enumerator for a putative $[[35,1,7]]$ code with $\epsilon_{\rm out} \approx  1116496 \epsilon ^{11} +O(\epsilon^{12})$, and threshold $\epsilon_*=0.21123$:
\begin{equation}
\begin{aligned}
    A(1, y) = & \, 1 + 2988y^6 + 45834y^8 + 296382y^{10} + 939012y^{12} + 390696y^{14} \\
              & + 478032y^{16} + 36246354y^{18} + 352884120y^{20} + 1382651340y^{22} \\
              & + 3406283646y^{24} + 5132651490y^{26} + 4553571492y^{28} + 1995562656y^{30} \\
              & + 279328743y^{32} + 38536398y^{34}.
\end{aligned}
\end{equation}

\item A weight enumerator for a putative $[[35,1,11]]$ code with $\epsilon_{\rm out} \approx  \frac{11781 \epsilon ^5}{23} +O(\epsilon^6)$, and threshold $\epsilon_*=0.16331$:
  \begin{equation}
    \begin{split}
    A(1, y) = & \ 1 + 42840y^{12} + 6715170y^{16} + 46236960y^{18} + 339481296y^{20} \\
              & + 1334551680y^{22} + 3443179320y^{24} + 5213799360y^{26} \\
              & + 4481873880y^{28} + 1943770752y^{30} + 353253285y^{32} \\
              & + 16964640y^{34}.
    \end{split}
\end{equation}

\end{itemize}

Some potential class $1$ weight-enumerators with $\nu \geq 7$ are:
\begin{itemize}
    \item A weight enumerator for a putative $[[25,1,3]]$ code with $\epsilon_{\rm out} \approx  \frac{23591 \epsilon ^7}{5} +O(\epsilon^8)$, and threshold $\epsilon_*=0.209325 $:
\begin{equation}
\begin{aligned}
    A(1, y) = & \, 1 + 39y^4 + 1155y^6 + 8679y^8 + 8796y^{10} + 112482y^{12} \\
              & + 487338y^{14} + 2805963y^{16} + 5398860y^{18} + 5548959y^{20} \\
              & + 2268459y^{22} + 136485y^{24}.
\end{aligned}
\end{equation}

       \item A weight enumerator for a putative $[[25,1,5]]$ code with $\epsilon_{\rm out} \approx  21181 \epsilon ^7 +O(\epsilon^8)$, and threshold $\epsilon_*=0.18769  $:
\begin{equation}
\begin{aligned}
    A(1, y) = & \, 1 + 21y^4 + 759y^6 + 5109y^8 + 8556y^{10} + 61854y^{12} \\
              & + 578946y^{14} + 2755767y^{16} + 5546652y^{18} + 5370165y^{20} \\
              & + 2263167y^{22} + 186219y^{24}.
\end{aligned}
\end{equation}

    \item A weight enumerator for a putative $[[31,1,3]]$ code with $\epsilon_{\rm out} \approx  {1144115 \epsilon ^{10}} +O(\epsilon^{11})$, and threshold $\epsilon_*=0.210862$:
\begin{equation}
\begin{aligned}
    A(1, y) = & \, 1 + 522y^4 + 15045y^6 + 163953y^8 + 889332y^{10} + 2605962y^{12} \\
              & + 3622425y^{14} + 3506211y^{16} + 24209232y^{18} + 130869318y^{20} \\
              & + 302765415y^{22} + 355566699y^{24} + 209428092y^{26} + 39798054y^{28} \\
              & + 301563y^{30}.
\end{aligned}
\end{equation}
    
    \item A weight enumerator for a putative $[[31,1,9]]$ code with $\epsilon_{\rm out} \approx  {18569 \epsilon ^7} +O(\epsilon^8)$, and threshold $\epsilon_*=0.174321 $:
 \begin{equation}
\begin{aligned}
    A(1, y) = & \, 1 + 369y^6 + 2898y^8 + 4521y^{10} + 57951y^{12} + 466488y^{14} \\
              & + 6245181y^{16} + 36350466y^{18} + 139591494y^{20} + 293155569y^{22} \\
              & + 343995552y^{24} + 204720453y^{26} + 45717291y^{28} + 3433590y^{30}.
\end{aligned}
\end{equation}

    \item A weight enumerator for a putative $[[31,1,9]]$ code with $\epsilon_{\rm out} \approx  \frac{245939 \epsilon ^7}{59} +O(\epsilon^8)$, and threshold $\epsilon_*=0.209924$:
\begin{equation}
\begin{aligned}
    A(1, y) = & \, 1 + 573y^4 + 4548y^6 + 10872y^8 + 57435y^{10} + 1568526y^{12} \\
              & + 2104353y^{14} + 14274369y^{16} + 4369902y^{18} + 141384465y^{20} \\
              & + 311881446y^{22} + 382394646y^{24} + 166952583y^{26} + 30493908y^{28} \\
              & + 18244197y^{30}.
\end{aligned}
\end{equation}
\end{itemize}

\bibliographystyle{ssg}
\bibliography{qudit2024}

\end{document}